\numberwithin{equation}{section}
\newtheorem{theorem}{Theorem}[section]
\newtheorem{lemma}[theorem]{Lemma}
\newtheorem{remark}[theorem]{Remark}
\newtheorem{corollary}[theorem]{Corollary}
\newtheorem{proposition}[theorem]{Proposition}
\begin{document}
\title{Optimal Liquidation with High Risk Aversion and Small Linear Price Impact}
  \thanks{YD Supported in part by the GIF Grant 1489-304.6/2019 and the ISF grant 230/21}

   \author{Leonid Dolinskyi \address{
 Department of Finance, National University of Kyiv-Mohyla Academy.\\
 e.mail: ldolinskyi@ukma.edu.ua}} 
  \author{Yan Dolinsky \address{
 Department of Statistics, Hebrew University of Jerusalem.\\
 e.mail: yan.dolinsky@mail.huji.ac.il}}
  \date{\today}
\maketitle
\begin{abstract}
We consider the Bachelier model with linear price impact. Exponential utility
indifference prices are studied for vanilla European options
in the case where the investor is required to liquidate her position.
Our main result is establishing a
non-trivial scaling
limit for a vanishing price impact which is inversely proportional to the risk aversion. We compute the limit of the corresponding utility indifference prices and find explicitly a family of portfolios which are asymptotically optimal.
\end{abstract}
\begin{description}
\item[Mathematical Subject Classification (2010)] 91B16, 91G10, 60H30
\item[Keywords] exponential utility, linear price impact, optimal liquidation
\end{description}

\keywords{}
 \maketitle \markboth{}{}
\renewcommand{\theequation}{\arabic{section}.\arabic{equation}}
\pagenumbering{arabic}

\section{Introduction}\label{sec:1}
In financial markets, trading moves prices against the trader:
buying faster increases execution prices, and selling faster decreases them. This aspect
of liquidity, known as market depth (see Black (1986))
 or price-impact, has received large
attention in optimal liquidation problems,
see, for instance,
Almgren \& Chriss (2001), Schied et al. (2010), Gatheral \& Schied (2011),
Bayrakatar \& Ludkovski (2014),
Bank \& Voß (2019), Fruth et al. (2019), 
and the references therein.

It is well known that in 
the presence of
price impact, super–replication is prohibitively costly, see  Guasoni \& Rasonyi (2015). 
Namely, in the presence of price
impact, even in market models such as the Bachelier model or the Black–Scholes model
(which are complete in the frictionless setup) there is no practical way to construct a
hedging strategy which eliminates all risk from a financial position. 
This brings us to introducing 
preferences.
We assume that the preferences
of the agent are given by an exponential utility. 
Then, the
optimal hedging strategy is determined by 
maximizing the expected exponential utility of the terminal wealth
generated by the dynamic trading in the underlying asset minus 
the liability of the investor which is equal to
 the payoff of the option. 
A natural notion of option
pricing, in this setting, is the utility 
indifference price (which we define in Section 2).

In this paper we consider the problem of optimal liquidation for the exponential
utility function in a 
model with temporary linear price impact. 
Formally, we study exponential utility maximization in the presence of quadratic transaction costs and the
constraints that the number of shares at the maturity date is zero.
The motivation for the later constraint is that in real market conditions 
many of the derivative securities (such as European options) are cash settled.  

We compute the asymptotic behavior of the exponential utility indifference prices where
the risk aversion goes to infinity at a rate which is inversely proportional to the linear
price impact which goes to zero. In addition we provide a family of asymptotically
optimal hedging strategies. 
We divide the proof of our main result (Theorem \ref{thm.1}) into two main steps:
the proof of the lower bound and the proof of the upper bound.
In the proof of the lower bound we apply
 Theorem 2.2 from Dolinsky (2022) which
gives a dual representation of the certainty
equivalent for the case where the investor has to liquidate her position. This dual representation together with the Brownian structure
allows us to compute the scaling limit of the utility indifference prices.
The proof of the upper bound is done by an explicit construction of a family of portfolios which are asymptotically optimal.

The above type of scaling limits goes back to 
the seminal work of Barles \& Soner (1998) which
determines the scaling limit of utility indifference prices of vanilla options
for small proportional transaction costs and high risk aversion. 
The present
paper provides an analogous analysis for the case of quadratic transaction costs, albeit
using convex duality and martingale techniques rather than taking a PDE
approach as pursued in Barles \& Soner (1998). 
The financial idea behind this approach is to produce reasonable 
option prices which in the presence of small friction allow 
to "almost" super--hedge the derivative security (this corresponds to large risk aversion). 

The current work is also closely related to the recent paper 
Ekren \& Nadtochiy (2022)
where the authors considered utility--based hedging
with quadratic transaction costs and Bachelier dynamics for the unaffected stock price. 
For a given risk aversion the authors apply the Hamilton–Jacobi-Bellman (HJB) 
methodology and obtain a representation of the value function and
the optimal strategy. For technical reasons, instead of requiring that the 
number of shares at the maturity date be zero, 
they penalize the square of the number of shares at the maturity date. 

The rest of the paper is organized as follows. In the next section we introduce the
setup and formulate the main results.
In Section 3
we prove the lower bound.
In Section 4 we prove the upper bound. In Section 5 we derive an
auxiliary result from the field of deterministic variational analysis.

\section{Preliminaries and Main Results}
Let $T<\infty$ be the time horizon and let
$W=(W_t)_{t \in [0,T]}$
be a standard one dimensional Brownian motion defined on the filtered probability space
$(\Omega, \mathcal{F},(\mathcal F_t)_{t\in [0,T]}, \mathbb P)$
where the filtration $(\mathcal F_t)_{t\in [0,T]}$ satisfies the
usual assumptions (right continuity and completeness).
We consider a simple financial market
with a riskless savings account bearing zero interest (for simplicity) and with a risky
asset $S=\left(S_t\right)_{t \in [0,T]}$ with Bachelier price dynamics
\begin{equation}\label{2.bac}
S_t=S_0+\mu t+\sigma W_t
\end{equation}
where $S_0\in \mathbb R$ is the initial position of the risky asset,
$\mu\in\mathbb R$ is the constant drift and
$\sigma>0$ is the constant volatility.

Following Almgren \& Chriss (2001),
we model the investor’s market impact, in a temporary linear form
and, thus, when at time $t$ the investor turns over her position $\Phi_t$ at the rate $\dot{\Phi}_t:=\frac{d\Phi_t}{dt}$
the execution price is $S_t+\frac{\Lambda}{2}\dot{\Phi}_t$ for some constant $\Lambda>0$.
The portfolio
value at the maturity date is given by
\begin{equation}\label{2.1}
V^{\Phi}_T:=\int_{0}^T \Phi_t dS_t-\frac{\Lambda}{2}\int_{0}^T \dot{\Phi}^2_t dt.
\end{equation}

In our setup the investor has to liquidate
her position. Thus, the natural class of admissible strategies which we denote by $\mathcal A$ is
the set of all progressively measurable processes $\Phi=(\Phi_t)_{t\in [0,T]}$ with
differentiable trajectories
such that $\int_{0}^T \dot{\Phi}^2_t dt<\infty$ and $\Phi_T=0$ almost surely.
 We assume that the initial number of shares $\Phi_0$ is fixed.

Consider a vanilla European option with the payoff $X=f(S_T)$ where $f$ is of the form
\begin{equation}\label{2.form}
f(x)=\max\left(0, \Theta \left(x-K\right)\right), \ \ x\in\mathbb R
\end{equation}
for some constants $\Theta,K\in \mathbb R$. Observe that this form includes call/put options.

The investor will assess the quality of a hedge by the resulting expected utility.
Namely, we follow the approach proposed in Hodges \& Neuberger (1989) which says
that the price of the contingent claim (from the seller's point of view) is 
the amount leading the investor to be indifferent between the following two actions:\\
(i) Selling the option and hedging it. (ii) Hedging with no option. 
Thus, assuming exponential utility with constant absolute risk aversion $\alpha>0$, the utility
indifference price 
$\pi=\pi(\Lambda,\alpha,\Phi_0,X)$
satisfies
$$\sup_{\Phi\in\mathcal A}\mathbb E_{\mathbb P}\left[-\exp\left(\alpha\left(X-\pi-V^{\Phi}_T\right)\right)\right]=
\sup_{\Phi\in\mathcal A}\mathbb E_{\mathbb P}\left[-\exp\left(-\alpha V^{\Phi}_T\right)\right]
.$$
We obtain the definition
\begin{equation}\label{2.2}
\pi(\Lambda,\alpha,\Phi_0,X):=
\frac{1}{\alpha}\log\left(\frac{\inf_{\Phi\in\mathcal A}\mathbb E_{\mathbb P}\left[\exp\left(\alpha\left(X-V^{\Phi}_T\right)\right)\right]}
{\inf_{\Phi\in\mathcal A}\mathbb E_{\mathbb P}\left[\exp\left(-\alpha V^{\Phi}_T\right)\right]}\right).
\end{equation}

The certainty equivalent
of the claim $X$ is given by 
$$c(\Lambda,\alpha,\Phi_0,X):=
\frac{1}{\alpha}\log\left(\inf_{\Phi\in\mathcal A}\mathbb E_{\mathbb P}\left[\exp\left(\alpha\left(X-V^{\Phi}_T\right)\right)\right]\right).$$
Namely, $c:=c(\Lambda,\alpha,\Phi_0,X)$ satisfies 
$$\sup_{\Phi\in\mathcal A}\mathbb E_{\mathbb P}\left[-\exp\left(\alpha\left(X-c-V^{\Phi}_T\right)\right)\right]=-1.$$
Economically speaking, the term $c$ is the amount leading the investor to be indifferent between 
the following: (i) Selling the option and hedging it. (ii) Doing nothing  $(-1=-e^{-\alpha 0})$.
From the economics point of view, the certainty equivalent 
is a more appropriate term for the buyer of the option. Hence, in our setup (we treat the seller) 
we are mainly interested in the utility indifference price 
$\pi(\Lambda,\alpha,\Phi_0,X)$. The certainty equivalent $c(\Lambda,\alpha,\Phi_0,X)$ 
can be viewed as the logarithmic scale 
for the 
value of the utility maximization problem 
$\sup_{\Phi\in\mathcal A}\mathbb E_{\mathbb P}\left[-\exp\left(\alpha\left(X-V^{\Phi}_T\right)\right)\right]$. Moreover, the certainty equivalent term
appears naturally in the 
dual representation from Dolinsky (2022).  
For more details on utility indifference pricing see
Carmona (2009).

We notice that if the risk aversion $\alpha>0$ is fixed, then by applying standard density arguments we obtain that
for $\Lambda\downarrow 0$, the above indifference price converges to the unique price of the continuous
time complete (frictionless) market given by (\ref{2.bac}). A more interesting limit emerges,
however, if we re-scale the investor’s risk-aversion in the form $\alpha:=A/\Lambda$.
Hence, we fix $A>0$ and consider the case where the risk aversion is  $\alpha(\Lambda):=\frac{A}{\Lambda}$.

A simple and rough intuition for this type of scaling is done by applying 
Schied \& Sch\"{o}neborn (2007). Indeed, consider the simple case where $\sigma=1$ and $\mu=0$. Then, from Theorem 2.1 in 
Schied \& Sch\"{o}neborn (2007) we have 
$$\inf_{\Phi\in\mathcal A}\mathbb E_{\mathbb P}\left[\exp\left(-\alpha V^{\Phi}_T\right)\right]=
\exp\left(-\alpha\Phi_0 S_0+\frac{\alpha \Phi^2_0}{2}\sqrt{\alpha\Lambda}\coth\left(\sqrt\frac{\alpha}{\Lambda}T\right)\right).$$
Hence,  
if we are looking for a scaling such that 
$$\frac{1}{\alpha}\log\left(\inf_{\Phi\in\mathcal A}\mathbb E_{\mathbb P}\left[\exp\left(-\alpha V^{\Phi}_T\right)\right]\right)=
-\Phi_0 S_0+\frac{ \Phi^2_0}{2}\sqrt{\alpha\Lambda}\coth\left(\sqrt\frac{\alpha}{\Lambda}T\right)$$
will converge  as $\Lambda\downarrow 0$ and $\alpha\rightarrow\infty$ (for any $\Phi_0$), 
then the right scaling is $\alpha(\Lambda):=\frac{A}{\Lambda}$.

Before we
formulate the main result we need some preparations.
Introduce the functions
\begin{equation}\label{def1}
g(x):=\sup_{y\in\mathbb R} \left[f(x+y)-\frac{y^2}{4\sigma\sqrt A}\right]=
\max\left(0, \Theta\left(x-K\right)+\sigma\sqrt A \Theta^2 \right), \ \  x\in \mathbb R
\end{equation}
and
$$
u(t,x):=\mathbb E_{\mathbb P}\left[g(x+\sigma W_{T-t} )\right], \quad (t,x)\in [0,T]\times\mathbb R.
$$
The term $u(t,S_t)$
represents the price at time $t$ of a European option with the payoff $g(S_T)$
in the complete market given by (\ref{2.bac}).
It is well known that $u\in C^{1,2}([0,T)\times\mathbb R)$ solves the PDE
\begin{equation}\label{PDE}
\frac{\partial u}{\partial t}+\frac{\sigma^2}{2}\frac{\partial^2 u}{\partial x^2}=0 \ \ \ \ \mbox{in} \ \ [0,T)\times\mathbb R.
\end{equation}
Next, let $\Lambda>0$ and let
$$\rho=\rho(\Lambda):=\frac{\sigma^2 \alpha(\Lambda)}{\Lambda}=\frac{\sigma^2 A }{\Lambda^2}$$
 be the risk-liquidity ratio.
Consider the (random) ODE on the interval $[0,T]$
\begin{eqnarray}\label{ODE}
&\dot{F}_t=\sqrt\rho\left(\frac{\cosh(\sqrt\rho (T-t))}{2 \cosh^2\left(\frac{\sqrt\rho (T-t)}{2}\right)}\frac{\partial u}{\partial x}(t,S_t-\sigma\sqrt A F_t)-\tanh(\sqrt\rho (T-t))F_t\right), \\
&\mbox{with} \ \mbox{the} \ \mbox{initial} \ \mbox{condition} \ \ F_0=\Phi_0\coth(\sqrt\rho T)\nonumber.
\end{eqnarray}
From the linear growth of $g$ it follows that for any $\epsilon>0$ the functions
$\frac{\partial u}{\partial x},\frac{\partial^2 u}{\partial x^2}$ are uniformly bounded in the domain $[0,T-\epsilon]\times\mathbb R$. In particular
$\frac{\partial u}{\partial x}$ is Lipschitz continuous with respect to $x$ in the domain $[0,T-\epsilon]\times\mathbb R$.
Observe that the functions
$\frac{\cosh(\sqrt\rho (T-t))}{2 \cosh^2\left(\frac{\sqrt\rho (T-t)}{2}\right)}, \tanh(\sqrt\rho (T-t))$ are bounded.
Hence, from the standard theory of
ODE
(see
Walter (1998), Chapter II, Section 6) we obtain that
there exists a unique solution to (\ref{ODE})
which we denote by $F^{\Lambda}=(F^{\Lambda}_t)_{t\in [0,T)}$ and the solution is
Lipschitz continuous, and so
 $\lim_{t\rightarrow T-}F^{\Lambda}_t$ exists. Set $F^{\Lambda}_T:=\lim_{t\rightarrow T-}F^{\Lambda}_t$
 and define
\begin{equation}\label{2.3}
\Phi^{\Lambda}_t:=\tanh\left(\sqrt{\rho(\Lambda)} (T-t)\right)   F^{\Lambda}_t, \ \ t\in [0,T].
\end{equation}

\begin{theorem}\label{thm.1}
For vanishing linear price impact $\Lambda \downarrow 0$ and re-scaled high risk-aversion
$A/\Lambda$ with $A>0$ fixed, the certainty equivalent of
$X=\max\left(0, \Theta \left(S_T-K\right)\right)$
has the scaling limit
\begin{equation}\label{1}
\lim_{\Lambda\downarrow 0} c(\Lambda,A/\Lambda,\Phi_0,X)=
u\left(0,S_0-\sigma\sqrt A\Phi_0\right)+\frac{\sigma\sqrt A \Phi^2_0 }{2}.
\end{equation}
  Moreover, the trading strategies given by (\ref{2.3}) are asymptotically optimal, i.e.
\begin{equation}\label{2}
 \lim_{\Lambda\downarrow 0}\frac{\Lambda}{A}
 \log\left(\mathbb E_{\mathbb P}\left[\exp\left(\frac{A}{\Lambda}\left(X-V^{\Phi^{\Lambda}}_T\right)\right)\right]
\right)=u\left(0,S_0-\sigma\sqrt A\Phi_0\right)+\frac{\sigma\sqrt A \Phi^2_0 }{2}.
\end{equation}
\end{theorem}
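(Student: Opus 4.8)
\emph{Structure.} Write $L:=u(0,S_0-\sigma\sqrt A\Phi_0)+\tfrac{\sigma\sqrt A\Phi^2_0}{2}$ for the conjectured limit and abbreviate $\alpha=\alpha(\Lambda)=A/\Lambda$, $\rho=\rho(\Lambda)=\sigma^2A/\Lambda^2$, so that $\sqrt\rho=\sigma\sqrt A/\Lambda\to\infty$ as $\Lambda\downarrow 0$. Following the roadmap in the introduction, I would prove the two one-sided estimates $\liminf_{\Lambda\downarrow 0} c(\Lambda,A/\Lambda,\Phi_0,X)\ge L$ and $\limsup_{\Lambda\downarrow 0} c(\Lambda,A/\Lambda,\Phi_0,X)\le L$. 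Since $\Phi^{\Lambda}\in\mathcal A$ is admissible, the infimum defining the certainty equivalent gives $c(\Lambda,A/\Lambda,\Phi_0,X)\le \frac{\Lambda}{A}\log\mathbb E_{\mathbb P}[\exp(\frac A\Lambda(X-V^{\Phi^{\Lambda}}_T))]=:J_\Lambda$. Hence the whole theorem follows once I show $\liminf c\ge L$ together with $\limsup J_\Lambda\le L$: these two facts sandwich both $c$ and $J_\Lambda$ between quantities tending to $L$, yielding \eqref{1} and \eqref{2} simultaneously and exhibiting \eqref{2.3} as asymptotically optimal.

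\emph{Upper bound via the explicit strategy.} The goal is $\limsup_{\Lambda\downarrow 0} J_\Lambda\le L$. The idea is that the feedback law \eqref{ODE} is engineered so that, after applying It\^o's formula to $u(t,S_t-\sigma\sqrt A F^{\Lambda}_t)$ and using the heat equation \eqref{PDE} to annihilate the finite-variation part, the random variable $X-V^{\Phi^{\Lambda}}_T$ decomposes as $L$ plus a stochastic integral against $W$ plus a controlled remainder. Concretely I would integrate $\int_0^T\Phi^{\Lambda}_t\,dS_t$ by parts (using $\Phi^{\Lambda}_T=0$ and $\Phi^{\Lambda}_0=\Phi_0$), substitute the dynamics \eqref{2.3} and \eqref{ODE}, and complete the square in the quadratic-cost term $\frac\Lambda2\int_0^T(\dot\Phi^{\Lambda}_t)^2\,dt$; the coefficients $\tanh(\sqrt\rho(T-t))$ and $\frac{\cosh(\sqrt\rho (T-t))}{2\cosh^2(\frac{\sqrt\rho (T-t)}{2})}$ are precisely those that match the quadratic penalty appearing in \eqref{def1} and produce the shift $-\sigma\sqrt A\Phi_0$ together with the term $\frac{\sigma\sqrt A\Phi^2_0}{2}$. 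One is then left with an exponential moment of a Gaussian-type stochastic integral, whose normalization $\frac\Lambda A\log\mathbb E_{\mathbb P}[\,\cdot\,]$ converges to $L$. For $t$ bounded away from $T$ the singular perturbation $\sqrt\rho\to\infty$ forces $F^{\Lambda}_t$ onto the slow manifold $F^{\Lambda}_t\approx\frac{\partial u}{\partial x}(t,S_t-\sigma\sqrt A F^{\Lambda}_t)$, i.e.\ a shifted delta-hedge of the payoff $g$.

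\emph{Lower bound via duality.} For $\liminf_{\Lambda\downarrow 0} c\ge L$ I would invoke the dual representation of the certainty equivalent (Theorem 2.2 in Dolinsky (2022)), which recasts $c(\Lambda,A/\Lambda,\Phi_0,X)$ as a variational expression, a supremum over dual controls consisting of a drift/measure change of $S$ together with a running quadratic penalty. Choosing a near-optimal dual element adapted to the Brownian filtration and passing to the limit $\Lambda\downarrow 0$, the penalty localizes and the optimization over the instantaneous drift perturbation $y$ reproduces exactly the sup-convolution \eqref{def1}; this reduces the limiting dual problem to the deterministic variational problem solved in Section 5, whose value is $L$. The boundary contribution at $t=0$ carried by the fixed initial position $\Phi_0$ supplies the terms $u(0,S_0-\sigma\sqrt A\Phi_0)$ and $\frac{\sigma\sqrt A\Phi^2_0}{2}$.

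\emph{Main obstacle.} I expect the crux to be the uniform control of the exponential moment in the upper bound under the singular scaling $\alpha=A/\Lambda\to\infty$: because the risk aversion blows up, an $O(\Lambda)$ discrepancy in the realized wealth $V^{\Phi^{\Lambda}}_T$ is amplified to an $O(1)$ error in $J_\Lambda$, so the It\^o remainder and the boundary layer near $t=T$ (where $\tanh(\sqrt\rho(T-t))$ and $\coth(\sqrt\rho(T-t))$ depart from $1$ and $\frac{\partial^2 u}{\partial x^2}$ degenerates as $t\to T$) must be estimated with care. Establishing that the relevant stochastic exponential has the correct limiting log-moment, e.g.\ verifying a true-martingale/uniform-integrability property and that the terminal-layer cost vanishes after rescaling, is the delicate step; the matching lower bound is comparatively softer once the limiting dual functional is identified and handed to the deterministic analysis of Section 5.
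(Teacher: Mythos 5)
Your architecture is the same as the paper's: the sandwich $c(\Lambda,A/\Lambda,\Phi_0,X)\le J_\Lambda$ (admissibility of $\Phi^\Lambda$), a lower bound for $c$ via the dual representation of Dolinsky (2022) reduced to the deterministic variational problem of Section \ref{sec:5}, and an upper bound for $J_\Lambda$ via It\^o's formula applied to $u$ composed with the shifted state $S_t-\sigma\sqrt A F^\Lambda_t$. However, your upper-bound outline has a genuine gap: it never connects the terminal value of the It\^o argument, which is $u\left(T,S_T-\sigma\sqrt A F^\Lambda_T\right)=g\left(S_T-\sigma\sqrt A F^\Lambda_T\right)$, to the actual liability $X=f(S_T)$. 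Note that only $\Phi^\Lambda_T=\tanh(0)F^\Lambda_T=0$ vanishes; the auxiliary state $F^\Lambda_T$ does \emph{not}, so the payoff appearing in the bound is evaluated at a point shifted by $\sigma\sqrt A F^\Lambda_T$, which is $O(1)$. Under the scaling $A/\Lambda\to\infty$ any uncontrolled $O(1)$ terminal discrepancy is fatal. The paper's Step I closes this: integrating \eqref{ODE} gives $|F^\Lambda_T|\le |\Phi_0|/\sinh\left(\sqrt{\rho(\Lambda)}T\right)+|\Theta|$, and since $f$ is $|\Theta|$--Lipschitz, the extra term $\sigma\sqrt A\Theta^2$ built into the sup-convolution \eqref{def1} exactly absorbs a shift of magnitude $\sigma\sqrt A|\Theta|$, yielding $g\left(S_T-\sigma\sqrt A F^\Lambda_T\right)\ge f(S_T)-\sigma\sqrt A|\Phi_0\Theta|/\sinh\left(\sqrt{\rho(\Lambda)}T\right)$, i.e.\ \eqref{4.--}. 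This is precisely the (only) place where the specific call/put form \eqref{2.form} is used (cf.\ Remark \ref{rem0}); your outline attributes the role of the payoff structure to the running-cost algebra instead, where it plays no part.

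Second, the step you flag as the crux is not one, and the way you would resolve it does not quite work. After the exact cancellation of all $dt$-terms (the paper verifies that the drift of $M^\Lambda_t:=\exp\bigl(\tfrac{A}{\Lambda}\bigl(u(t,S_t-\sigma\sqrt A F^\Lambda_t)+\tfrac{\sigma\sqrt A F^\Lambda_t\Phi^\Lambda_t}{2}-V^{\Phi^\Lambda}_t\bigr)\bigr)$ vanishes identically, leaving $dM^\Lambda_t/M^\Lambda_t=\tfrac{A}{\Lambda}(\Upsilon^\Lambda_t-\Phi^\Lambda_t)\,dS_t$), the exponent is a stochastic integral with a \emph{random} integrand, so it is not Gaussian and its exponential moment cannot simply be ``computed.'' No true-martingale or uniform-integrability property is needed either: after removing the $\mu$-drift, which costs only $O(\Lambda)$ in the normalized logarithm thanks to the Step II estimate \eqref{4.0}, the resulting process $N^\Lambda$ is a positive local martingale, hence a supermartingale, and $\mathbb E_{\mathbb P}[N^\Lambda_T]\le N^\Lambda_0$ is an inequality in exactly the direction the upper bound requires. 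So the delicate points are the algebraic identity behind the cancellation and Steps I--II above, not integrability; conversely, your lower-bound paragraph, while vague, does track the paper's actual route (Lemmas \ref{lem.10}--\ref{lem.11} and Proposition \ref{lem5.1}).
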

From Theorem \ref{thm.1} we obtain immediately the following corollary which says that the
asymptotic value of the utility indifference prices is equal to the price of the vanilla European option with the payoff $g(S_T)$ and the shifted initial stock price
$S_0-\sigma \sqrt A\Phi_0$.
\begin{corollary}
For vanishing linear price impact $\Lambda \downarrow 0$ and re-scaled high risk-aversion
$A/\Lambda$ with $A>0$ fixed, the utility indifference price of $X$ given by (\ref{2.2}) has the scaling limit
$$
 \lim_{\Lambda\downarrow 0}\pi(\Lambda,A/\Lambda,\Phi_0,X)=u\left(0,S_0-\sigma\sqrt A\Phi_0\right).$$
\end{corollary}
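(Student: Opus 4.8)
The plan is to obtain the corollary as an immediate consequence of Theorem \ref{thm.1}, by writing the utility indifference price as a difference of two certainty equivalents and applying the theorem to each of them separately.

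First I would record the algebraic identity relating the two notions of price. Comparing the definition (\ref{2.2}) of $\pi$ with the definition of the certainty equivalent $c$, the numerator inside the logarithm in (\ref{2.2}) is precisely $\exp\left(\alpha\, c(\Lambda,\alpha,\Phi_0,X)\right)$, while the denominator $\inf_{\Phi\in\mathcal A}\mathbb E_{\mathbb P}\left[\exp\left(-\alpha V^{\Phi}_T\right)\right]$ is the value of the very same functional applied to the zero claim, i.e. $\exp\left(\alpha\, c(\Lambda,\alpha,\Phi_0,0)\right)$. Taking $\frac{1}{\alpha}\log$ of the ratio therefore yields
$$\pi(\Lambda,\alpha,\Phi_0,X)=c(\Lambda,\alpha,\Phi_0,X)-c(\Lambda,\alpha,\Phi_0,0).$$
This reduces the problem to computing the scaling limit of each certainty equivalent under the scaling $\alpha=A/\Lambda$.

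Next I would apply Theorem \ref{thm.1} twice. For the first term the theorem gives directly
$$\lim_{\Lambda\downarrow0}c(\Lambda,A/\Lambda,\Phi_0,X)=u\left(0,S_0-\sigma\sqrt A\,\Phi_0\right)+\frac{\sigma\sqrt A\,\Phi_0^2}{2}.$$
For the second term I would observe that the zero claim is the instance $\Theta=0$ of the payoff family (\ref{2.form}); substituting $\Theta=0$ into the explicit formula (\ref{def1}) gives $g\equiv 0$, whence the associated function $u$ vanishes identically. Applying Theorem \ref{thm.1} to this degenerate payoff therefore yields
$$\lim_{\Lambda\downarrow0}c(\Lambda,A/\Lambda,\Phi_0,0)=\frac{\sigma\sqrt A\,\Phi_0^2}{2}.$$

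Finally, since both limits exist the limit of the difference is the difference of the limits, and the two quadratic terms $\frac{\sigma\sqrt A\,\Phi_0^2}{2}$ cancel, leaving $\lim_{\Lambda\downarrow0}\pi(\Lambda,A/\Lambda,\Phi_0,X)=u\left(0,S_0-\sigma\sqrt A\,\Phi_0\right)$, as claimed. There is essentially no hard step here: the full content of the corollary is carried by Theorem \ref{thm.1}, and the only point deserving attention is the legitimacy of invoking that theorem for the zero claim, which is justified by noting that $\Theta=0$ is an admissible member of the payoff family (\ref{2.form}) and produces $g\equiv u\equiv 0$.
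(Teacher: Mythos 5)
Your proposal is correct and is exactly the paper's argument: the paper's proof also applies the limit (\ref{1}) to the numerator and, taking $X\equiv 0$ (the case $\Theta=0$ of (\ref{2.form}), for which $g\equiv u\equiv 0$), to the denominator of (\ref{2.2}), so that the two terms $\frac{\sigma\sqrt A\,\Phi_0^2}{2}$ cancel. You have simply spelled out the decomposition $\pi=c(\Lambda,\alpha,\Phi_0,X)-c(\Lambda,\alpha,\Phi_0,0)$ that the paper leaves implicit.
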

\begin{proof}
Apply (\ref{1}) and take $X\equiv 0$ for the
denominator of (\ref{2.2}).
\end{proof}
\begin{remark}\label{rem0}
In the proof of the lower bound (given in the next section) we only assume that the payoff function $f$
is Lipschitz continuous. By a more careful analysis we can prove that in fact there is an equality, namely (\ref{1}) holds true for any payoff function $X=f(S_T)$ with a Lipschitz continuous $f$.
 Unfortunately,
 the proof of (\ref{2}) (given in Section 4) uses the specific structure of the payoff given by (\ref{2.form}).
 This together with the fact that the most common vanilla options in real markets are of the form (\ref{2.form})
 led us to assume from the beginning that the payoff is of this form.

 Let us emphasize that our results can be extended to the multi--asset case with a similar proof. In the multi asset case
 the volatility $\sigma$ is replaced with a positive definite matrix
 and the functions $\coth$ and $\tanh$ are viewed as matrix valued functions.
 \end{remark}
 \begin{remark}\label{rem1}
 The current setup without the liquidation requirement $\Phi_T=0$ was studied in 
 Dolinsky \& Moshe (2022). 
In both cases (with or without liquidation) the scaling limit of the utility indifference prices
(with the same scaling $\alpha(\Lambda)=\frac{A}{\Lambda}$)
 is equal to $\mathbb E_{\mathbb P} \left[h\left(S_0-\sigma\sqrt A\Phi_0+\sigma W_T\right)\right]$ 
 for a modified function $h$. 
 In the present paper
 $$h(x):=g(x)=\sup_{y\in\mathbb R} \left[f(x+y)-\frac{y^2}{4\sigma\sqrt A}\right]=\max\left(0, \Theta\left(x-K\right)+\sigma\sqrt A \Theta^2 \right)$$
 while in Dolinsky \& Moshe (2022) the modified payoff is smaller and given by
 $$h(x):=\sup_{y\in\mathbb R} \left[f(x+y)-\frac{y^2}{2\sigma\sqrt A}\right]=\max\left(0, \Theta\left(x-K\right)+\frac{\sigma\sqrt A \Theta^2}{2} \right).$$
 In both cases the function $h$ is strictly larger (provided that $A>0$) than the original payoff function $f$.
The term  $\mathbb E_{\mathbb P} \left[h\left(S_0-\sigma\sqrt A\Phi_0+\sigma W_T\right)\right]$ represents the option price of the modified claim 
$h(S_T)$ in the complete (frictionless) Bachelier model with volatility 
$\sigma$ and shifted initial stock price $S_0-\sigma\sqrt A\Phi_0$.
 
Observe that if the risk aversion $\alpha$ is constant and $\Lambda=0$ (i.e. no friction)
 then formally $A=\alpha\Lambda=0$ and so, in this case the function $h$ coincides with the original payoff $f$
 and there is no shift in the initial stock price $S_0$.
Namely, 
 we recover 
 the price for the complete (frictionless) Bachelier model given by (\ref{2.bac}).  

Next, we discuss briefly the difference between the asymptotically optimal portfolios which are given by Theorem \ref{thm.1}
and those given in Dolinsky \& Moshe (2022).
From (\ref{2.3}) we have
\begin{equation}\label{4.0+}
\dot{\Phi}^{\Lambda}_t=\sqrt{\rho(\Lambda)}\left(\tanh\left(\frac{\sqrt{\rho(\Lambda)}\left(T-t\right)}{2} \right)\Upsilon^{\Lambda}_t-
\coth\left(\sqrt{\rho(\Lambda)} \left(T-t\right)\right)\Phi^{\Lambda}_t\right)
\end{equation}
where $\rho(\Lambda):=\frac{\sigma^2 A}{\Lambda^2}$ and  $\Upsilon^{\Lambda}_t:=\frac{\partial u}{\partial x}(t,S_t-\sigma\sqrt A F^{\Lambda}_t)$,  $t\in [0,T)$.

Thus, we have a mean reverting structure which combines tracking the
$\Delta$--hedging strategy $(\Upsilon^{\Lambda}_t)_{t\in [0,T]}$ of a modified claim $g$ and liquidating the position
up to the maturity date. 
As time $t$ approaches maturity the weight $\sqrt{\rho(\Lambda)}\tanh\left(\frac{\sqrt{\rho(\Lambda)}\left(T-t\right)}{2} \right)$of the $\Delta$--hedging strategy 
vanishes  and 
due to the term $\sqrt{\rho(\Lambda)}\coth\left(\sqrt{\rho(\Lambda)} \left(T-t\right)\right)$ (goes to $\infty$ for $t\uparrow T$)
the investor
trading is mainly towards liquidation.
This is in contrast to the asymptotically optimal portfolios in Dolinsky \& Moshe (2022)
which are just based on tracking the appropriate $\Delta$--hedging strategy.

In broad terms the methods of the proof in this paper are close to those in Dolinsky \& Moshe (2022)
and 
based on duality and explicit construction of 
asymptotically optimal portfolios. However, the additional constraint that the number of shares
at the maturity date is zero (i.e. liquidation), makes the mathematical analysis more challenging. 
In particular it requires a dual representation 
which was obtained recently in Dolinsky (2022) and treats the liquidation case.   
\end{remark}
\section{Proof of the Lower Bound}
In this section we prove the following statement.
\begin{proposition}\label{prop3.1}
For vanishing linear price impact $\Lambda \downarrow 0$ and re-scaled high risk-aversion
$A/\Lambda$ with $A>0$ fixed, we have the following lower bound
\begin{equation*}\label{3}
\lim\inf_{\Lambda\downarrow 0} c(\Lambda,A/\Lambda,\Phi_0,X)\geq
u\left(0,S_0-\sigma\sqrt A\Phi_0\right)+\frac{\sigma\sqrt A \Phi^2_0 }{2}.
\end{equation*}
\end{proposition}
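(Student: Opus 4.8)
The plan is to obtain the lower bound from the dual representation of the certainty equivalent proved in Theorem 2.2 of Dolinsky (2022). For the liquidation problem this representation expresses $c(\Lambda, A/\Lambda, \Phi_0, X)$ as a supremum, over a class of equivalent probability measures $\mathbb Q \sim \mathbb P$, of a reward term $\mathbb E_{\mathbb Q}[X]$ minus a penalty $\mathcal P_\Lambda(\mathbb Q)$ combining the entropic term $\frac{\Lambda}{A} H(\mathbb Q|\mathbb P)$ with the quadratic cost associated with the forced liquidation. Because $c$ is a supremum, any admissible $\mathbb Q$ gives a lower bound, so it suffices to produce, for each $\Lambda$, one measure $\mathbb Q^\Lambda$ and to prove
$$\liminf_{\Lambda\downarrow 0}\left(\mathbb E_{\mathbb Q^\Lambda}[X] - \mathcal P_\Lambda(\mathbb Q^\Lambda)\right) \geq u\left(0, S_0 - \sigma\sqrt A \Phi_0\right) + \frac{\sigma\sqrt A \Phi^2_0}{2}.$$
All of the work is thus in choosing $\mathbb Q^\Lambda$ and in computing the limit of its dual value.

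A guiding observation is that a deterministic, time-homogeneous Girsanov drift is far too crude: a constant drift produces, in the limit, only the frictionless price $\mathbb E_{\mathbb P}[f(S_0 + \sigma W_T)]$ together with a vanishing liquidation cost, whereas the target features the strictly larger modified payoff $g$ of (\ref{def1}) and the genuinely positive term $\frac{\sigma\sqrt A \Phi^2_0}{2}$. Both effects must be generated by an adapted drift living in a boundary layer of width of order $1/\sqrt{\rho} = \Lambda/(\sigma\sqrt A)$ near maturity, exactly the scale on which the hyperbolic weights in (\ref{ODE}) vary. I would therefore take $\mathbb Q^\Lambda$ with density $\exp\left(\int_0^T \theta^\Lambda_t \, dW_t - \frac12 \int_0^T (\theta^\Lambda_t)^2 \, dt\right)$ and an adapted drift $\theta^\Lambda$ assembled from an auxiliary process analogous to $F^\Lambda$ in (\ref{ODE}): one component proportional to the current position, which through the $\coth(\sqrt{\rho}(T-t))$ weight carries out the liquidation of $\Phi_0$ and shifts the effective initial price to $S_0 - \sigma\sqrt A \Phi_0$, and one component proportional to the $\Delta$-hedge $\frac{\partial u}{\partial x}$ of the modified claim, tuned so that the induced shift of $S_T$ realizes the pointwise supremum in the definition (\ref{def1}) of $g$.

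With this $\mathbb Q^\Lambda$ fixed I would evaluate the dual value explicitly. Under $\mathbb Q^\Lambda$ the shifted state $Z_t := S_t - \sigma\sqrt A F^\Lambda_t$ is an It\^o process, and since $u$ solves the heat equation (\ref{PDE}), It\^o's formula gives $du(t, Z_t) = \frac{\partial u}{\partial x}(t, Z_t)\, dZ_t$, so that $\mathbb E_{\mathbb Q^\Lambda}[g(Z_T)]$ telescopes to $u(0, Z_0) = u(0, S_0 - \sigma\sqrt A \Phi_0 \coth(\sqrt{\rho}T))$ plus an explicit drift integral. Tuning the option component of $\theta^\Lambda$ so that $\sigma\sqrt A F^\Lambda_T$ attains the supremum in (\ref{def1}) turns the reward into $\mathbb E_{\mathbb Q^\Lambda}[f(S_T)] = \mathbb E_{\mathbb Q^\Lambda}[g(Z_T)] + \frac{\sigma\sqrt A}{4}\mathbb E_{\mathbb Q^\Lambda}[(F^\Lambda_T)^2]$; against this the entropic part $\frac{\Lambda}{2A}\mathbb E_{\mathbb Q^\Lambda}[\int_0^T (\theta^\Lambda_t)^2\, dt]$ of $\mathcal P_\Lambda$ accounts for the cost of the shift, while the quadratic liquidation part collects the remaining boundary terms. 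Passing to $\Lambda\downarrow 0$, i.e. $\rho\to\infty$, I would use $\coth(\sqrt{\rho}(T-t)), \tanh(\sqrt{\rho}(T-t)) \to 1$ uniformly away from $t = T$, the localization of $\theta^\Lambda$ near maturity, and the uniform bounds on $\frac{\partial u}{\partial x}, \frac{\partial^2 u}{\partial x^2}$ on $[0, T-\epsilon]\times\mathbb R$, to conclude that the reward net of the entropic penalty converges to $u(0, S_0 - \sigma\sqrt A \Phi_0)$ while the liquidation part converges to $\frac{\sigma\sqrt A \Phi^2_0}{2}$.

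The main obstacle is precisely this coupled, singular limit. One must construct a single integrable change of measure that simultaneously encodes the liquidation of the initial inventory and the $\Delta$-hedging shift of the modified claim, verify that $\mathbb Q^\Lambda$ belongs to the admissible dual domain of Dolinsky (2022), and control the boundary layer at $t = T$, where the $\coth$ weight blows up and $u$ loses smoothness. The delicate point is to show that these singular contributions recombine, as $\rho\to\infty$, into exactly the penalty weight $\frac{1}{4\sigma\sqrt A}$ defining $g$ and the liquidation cost $\frac{\sigma\sqrt A \Phi^2_0}{2}$, rather than some other constants; as noted in Remark \ref{rem0}, only Lipschitz continuity of $f$ is needed for this direction, so the estimates should be carried out for a general Lipschitz payoff using the linear growth of $g$ and the derivative bounds of $u$ away from maturity.
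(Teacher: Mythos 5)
Your starting point coincides with the paper's: Theorem 2.2 of Dolinsky (2022) represents the certainty equivalent as a supremum of dual values, so a lower bound follows from exhibiting good dual elements and computing their limit. From there, however, your plan has two genuine gaps. First, the dual variable in that representation is a \emph{pair} $(\mathbb Q, M^{\mathbb Q})$ --- a measure together with a $\mathbb Q$--martingale on $[0,T)$ --- and the penalty $\Phi_0(M^{\mathbb Q}_0-S_0)+\frac{1}{2\Lambda}\int_0^T |M^{\mathbb Q}_t-S_t|^2\,dt$ must be evaluated along a martingale that you construct; it is not a functional $\mathcal P_\Lambda(\mathbb Q)$ of the measure alone, as your sketch treats it. This is not cosmetic: the freedom in choosing $M^{\mathbb Q}$ is precisely what encodes the liquidation constraint, and it is responsible both for the additive term $\frac{\sigma\sqrt A \Phi^2_0}{2}$ and for the weight $\frac{1}{4\sigma\sqrt A}$ in the definition (\ref{def1}) of $g$ (versus $\frac{1}{2\sigma\sqrt A}$ in the no--liquidation problem, cf.\ Remark \ref{rem1}). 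Since you never construct this martingale, the two constants you aim for cannot come out of what you write. In the paper this is the content of Lemmas \ref{lem.10}--\ref{lem.11}: after a Girsanov reduction to pairs $(\theta,M)$ of drifts and $\mathbb P$--martingales, both $\theta$ and $M$ are built explicitly from solutions of the deterministic variational problem of Proposition \ref{lem5.1}, whose explicit hyperbolic solution is what produces the penalty $\frac{(Z+\sigma\sqrt A\Phi_0)^2}{4\sigma\sqrt A}-\frac{\sigma\sqrt A\Phi^2_0}{2}$ up to $O(\Lambda)$. You name exactly this identification of constants as ``the delicate point'' but supply no mechanism for it; the It\^o/heat--equation telescoping you propose instead is the paper's \emph{upper}--bound argument (Step III of the proof of Proposition \ref{prop2}, the supermartingale $N^{\Lambda}$), and on the dual side it does not compute the penalty.

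Second, your requirement that the drift be ``tuned so that $\sigma\sqrt A F^{\Lambda}_T$ attains the supremum in (\ref{def1})'' is not achievable as stated. The maximizer $\zeta$ in $g(x)=f(x+\zeta(x))-\frac{\zeta^2(x)}{4\sigma\sqrt A}$ is only a bounded measurable (in fact discontinuous, because of the kink of $f$) function of the terminal shifted price, so you are asking an adapted, absolutely continuous, finite--energy process to terminate exactly at a discontinuous functional of $S_T$; no single ODE--driven construction does this. The paper circumvents the issue by an approximation: it proves the penalty estimate for arbitrary targets $Z=\iota+\int_0^T\kappa_t\,dW_t$ with bounded $\kappa$ vanishing near $T$ (Lemma \ref{lem.11}), and then takes $Z_n\to\zeta\left(S_0-\sigma\sqrt A\Phi_0+\sigma W_T\right)-\sigma\sqrt A\Phi_0$ in $L^2(\mathbb P)$, using only the Lipschitz continuity of $f$ to pass to the limit. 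Some such density step is unavoidable in your plan as well; without it, and without the variational computation above, the proposal remains a program rather than a proof.
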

We start with the following Lemma.
\begin{lemma}\label{lem.10}
Denote by $\Gamma$ the set of all progressively measurable processes
$\theta=(\theta_t)_{t\in [0,T]}$ such that
$\theta\in L^2(dt\otimes\mathbb P)$ and let
$\mathcal M$
be the set of all $\mathbb P$--martingales $M=(M_t)_{t\in [0,T)}$ which are defined on the half-open interval $[0,T)$
and satisfy
$||M||_{L^2(dt\otimes\mathbb P)}:=\mathbb E_{\mathbb P}\left[\int_{0}^T M^2_t dt\right]<\infty$.
Then,
for any $\Lambda,\alpha>0$ we have
\begin{eqnarray*}
&c(\Lambda,\alpha,\Phi_0,X)\\
&\geq \sup_{(\theta,M)\in \Gamma\times\mathcal M}\mathbb E_{\mathbb P}\left[
f\left(S_T+\sigma\int_{0}^T \theta_t dt\right)-\frac{1}{2\alpha}\int_{0}^T \theta^2_t dt\right.\\
&\left.-\Phi_0(M_0-S_0)-\frac{1}{2\Lambda}\int_{0}^T
\left|S_0+\mu t+\sigma\int_{0}^t \theta_s ds-M_t\right|^2 dt\right].
\end{eqnarray*}
\end{lemma}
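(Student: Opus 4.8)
The plan is to start from the identity $c(\Lambda,\alpha,\Phi_0,X)=\inf_{\Phi\in\mathcal A}\frac{1}{\alpha}\log\mathbb E_{\mathbb P}[\exp(\alpha(X-V^{\Phi}_T))]$, which holds because $\frac{1}{\alpha}\log(\cdot)$ is increasing, and then to produce the two dual variables $\theta$ and $M$ in two separate stages: the process $\theta$ comes from the entropic (Bou\'e--Dupuis type) dual representation of the exponential functional $Y\mapsto\frac{1}{\alpha}\log\mathbb E_{\mathbb P}[\exp(\alpha Y)]$, which is exactly the representation supplied by Theorem 2.2 of Dolinsky (2022), while the martingale $M$ arises from linearising the residual liquidation control problem.

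First stage (the variable $\theta$). I would apply the variational representation to the random variable $Y=X-V^{\Phi}_T$. Since only a lower bound is needed, I would fix an arbitrary $\theta\in\Gamma$ and simply drop the supremum, which replaces the Brownian path by the shifted path $\tau_\theta\colon W_\cdot\mapsto W_\cdot+\int_0^\cdot\theta_s\,ds$ and charges the quadratic cost $\frac{1}{2\alpha}\int_0^T\theta_t^2\,dt$ under $\mathbb P$. Under $\tau_\theta$ the terminal value $S_T$ becomes $S_T+\sigma\int_0^T\theta_t\,dt$ (producing the payoff term $f(S_T+\sigma\int_0^T\theta_t\,dt)$), while the controlled price acquires the dynamics $dS^{(\theta)}_t=(\mu+\sigma\theta_t)\,dt+\sigma\,dW_t$, i.e. $S^{(\theta)}_t=\tilde S_t+\sigma W_t$ with $\tilde S_t:=S_0+\mu t+\sigma\int_0^t\theta_s\,ds$. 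Taking the infimum over $\Phi$ isolates the residual control problem $\sup_{\Phi\in\mathcal A}\mathbb E_{\mathbb P}[V^{\Phi}_T\circ\tau_\theta]$, and it then remains to bound this quantity from above by $\Phi_0(M_0-S_0)+\frac{1}{2\Lambda}\mathbb E_{\mathbb P}\int_0^T|\tilde S_t-M_t|^2\,dt$ for every $M\in\mathcal M$.

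Second stage (the variable $M$). Here the engine is integration by parts combined with completion of squares. Writing $\int_0^T\Phi_t\,dS^{(\theta)}_t=-\Phi_0 S_0-\int_0^T S^{(\theta)}_t\dot\Phi_t\,dt$ (the boundary contribution at $T$ vanishing since $\Phi_T=0$), I would split $S^{(\theta)}_t=(\tilde S_t-M_t)+M_t+\sigma W_t$ and invoke the martingale identities $\mathbb E_{\mathbb P}\int_0^T M_t\dot\Phi_t\,dt=-\Phi_0 M_0$ and $\mathbb E_{\mathbb P}\int_0^T W_t\dot\Phi_t\,dt=0$, both obtained by a further integration by parts using $\Phi_T=0$, $W_0=0$, and the fact that $\int_0^\cdot\Phi\,dM$ and $\int_0^\cdot\Phi\,dW$ are true martingales. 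This gives $\mathbb E_{\mathbb P}[V^{\Phi}_T\circ\tau_\theta]=\Phi_0(M_0-S_0)-\mathbb E_{\mathbb P}\int_0^T\{(\tilde S_t-M_t)\dot\Phi_t+\frac{\Lambda}{2}\dot\Phi_t^2\}\,dt$, and the elementary pointwise inequality $-b\phi-\frac{\Lambda}{2}\phi^2\le\frac{b^2}{2\Lambda}$ (with $b=\tilde S_t-M_t$, $\phi=\dot\Phi_t$) yields the desired bound uniformly in $\Phi$. Assembling the two stages and taking the supremum over $(\theta,M)\in\Gamma\times\mathcal M$ produces the asserted inequality.

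The main obstacle is not the algebra above but the rigorous justification of the first stage for the \emph{unbounded} functional $X-V^{\Phi}_T$ over the liquidation-constrained class $\mathcal A$: one must guarantee that the dual representation applies and that the shift/change-of-measure manipulations are legitimate, which is precisely what the dual representation of Dolinsky (2022) provides. A secondary, purely technical point is integrability bookkeeping, namely matching $\Phi\in\mathcal A$ (finite $L^2$ trading rate, $\Phi_T=0$) against $\theta\in\Gamma$ and $M\in\mathcal M$ (both in $L^2(dt\otimes\mathbb P)$) so that every boundary term vanishes and each stochastic integral is a genuine martingale.
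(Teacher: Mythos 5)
Your proposal takes a genuinely different route from the paper: you prove the inequality as a direct weak-duality statement (Jensen/Girsanov for the entropy term, then pathwise integration by parts and completion of the square for the liquidation term), whereas the paper invokes the \emph{strong} duality of Theorem 2.2 in Dolinsky (2022) --- which already delivers both dual variables at once, as a pair consisting of a measure $\mathbb Q$ and a $\mathbb Q$-martingale $M^{\mathbb Q}$ --- and then reduces the Lemma to a translation step: for $(\theta,M)$ with $\theta$ in a dense class $\hat\Gamma$, it solves the SDE $dY_t=dW_t-\tau_t(Y)dt$, applies Girsanov, and couples so that the law of $(W^{\mathbb Q},M^{\mathbb Q})$ under $\mathbb Q$ equals the law of $(W,M+\sigma W)$ under $\mathbb P$, finishing with a density argument in $\theta$. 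Your algebra is correct: the shifted payoff, the cost $\frac{1}{2\alpha}\int_0^T\theta_t^2\,dt$, the identities $\mathbb E_{\mathbb P}\int_0^T M_t\dot\Phi_t\,dt=-\Phi_0M_0$ and $\mathbb E_{\mathbb P}\int_0^T W_t\dot\Phi_t\,dt=0$, and the pointwise bound $-b\phi-\frac{\Lambda}{2}\phi^2\le\frac{b^2}{2\Lambda}$ assemble into exactly the asserted dual objective, and the min--max direction ($\inf_\Phi\sup_\theta\ge\sup_\theta\inf_\Phi$) is the right one.

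However, there are genuine gaps, and they are not mere bookkeeping. First, in your second stage the two expectation identities are not automatic for arbitrary $\Phi\in\mathcal A$ and $M\in\mathcal M$: admissibility requires only $\int_0^T\dot\Phi_t^2\,dt<\infty$ almost surely, with no moment conditions under $\mathbb P$, so $\int\Phi\,dM$ and $\int\Phi\,dW$ are in general only local martingales and need not have zero expectation; moreover $M$ is defined only on $[0,T)$, and $\mathbb E_{\mathbb P}\int_0^TM_t^2\,dt<\infty$ does \emph{not} prevent $\mathbb E_{\mathbb P}[M_t^2]\to\infty$ as $t\uparrow T$, so the boundary term $\lim_{t\uparrow T}\Phi_tM_t$ in your integration by parts need not vanish without a quantitative argument (pathwise one only has $|\Phi_t|\le\sqrt{T-t}\,\bigl(\int_t^T\dot\Phi_s^2\,ds\bigr)^{1/2}$). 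A parallel issue affects your first stage: the entropic lower bound is applied to the functional $X-V^{\Phi}_T$, which need not be integrable under the shift, and for general $\theta\in\Gamma$ the Girsanov/Bou\'e--Dupuis step itself requires a truncation and density argument (the paper's reason for introducing $\hat\Gamma$). Second, your fallback --- that these justifications are ``precisely what the dual representation of Dolinsky (2022) provides'' --- mischaracterizes that theorem: it is not a Bou\'e--Dupuis representation of $\frac{1}{\alpha}\log\mathbb E_{\mathbb P}[e^{\alpha Y}]$, but the full dual representation (equation (3.21) of the paper) of the liquidation-constrained certainty equivalent over pairs $(\mathbb Q,M^{\mathbb Q})$. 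If you invoke it, both dual variables arrive simultaneously and your two stages must be replaced by a different, essential task --- constructing from a given $(\theta,M)$ under $\mathbb P$ a measure $\mathbb Q$ and a $\mathbb Q$-martingale $M^{\mathbb Q}$ realizing the same value --- which your proposal does not contain. As written, the argument supplies neither the technical proof of weak duality from scratch nor the translation step needed to use the cited theorem, so the hard part of the Lemma remains open.
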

\begin{proof}
Denote by $\mathcal Q$ the set of all equivalent
probability measures $\mathbb Q\sim\mathbb P$ with finite entropy
$\mathbb E_{\mathbb Q}\left[\log\left(\frac{d\mathbb Q}{d\mathbb P}\right)\right]<\infty$
relative to $\mathbb P$.
For any $\mathbb Q\in\mathcal Q$ let $\mathcal M^{\mathbb Q}$
be the set of all $\mathbb Q$--martingales $M^{\mathbb Q}=(M^{\mathbb Q}_t)_{t\in [0,T)}$ which are defined on the half-open interval $[0,T)$
and satisfy
$||M^{\mathbb Q}||_{L^2(dt\otimes\mathbb Q)}:=\mathbb E_{\mathbb Q}\left[\int_{0}^T |M^{\mathbb Q}_t|^2 dt\right]<\infty$.

From the linear growth of $f$ it follows that
$\mathbb E_{\mathbb P}\left[e^{\alpha  X}\right]<\infty$. Thus,
define the probability measure $\tilde{\mathbb P}$ by
$\frac{d\tilde{\mathbb P}}{d\mathbb P}:=\frac{e^{\alpha X}}{\mathbb E_{\mathbb P}\left[e^{\alpha  X}\right]}.$
The Cauchy–Schwarz inequality yields
that there exists $a>0$ such that
$\mathbb E_{\tilde{\mathbb P}}\left[\exp\left(a\sup_{0 \leq t\leq T} S^2_t\right)\right]<\infty.$
Hence, Assumption 2.1 in Dolinsky (2022) holds true. Thus,
by applying Theorem 2.2 in Dolinsky (2022) for the probability measure $\tilde{\mathbb P}$
and the simple equality
$$\mathbb E_{\mathbb Q}\left[\log\left(\frac{d\mathbb Q}{d\tilde{\mathbb P}}\right)\right]=
\mathbb E_{\mathbb Q}\left[\log\left(\frac{d\mathbb Q}{d\mathbb P}\right)-\alpha  X\right]+
\alpha \log\left(\mathbb E_{\mathbb P}\left[e^{\alpha  X}\right]\right) \ \ \forall \mathbb Q\in\mathcal Q$$
we obtain
\begin{eqnarray}\label{3.21}
&c(\Lambda,\alpha,\Phi_0,X)\\
&=\sup_{\mathbb Q\in\mathcal Q}\sup_{M^{\mathbb Q}\in\mathcal M^{\mathbb Q}}\mathbb E_{\mathbb Q}\left[X-\frac{1}{\alpha}\log\left(\frac{d\mathbb Q}{d\mathbb P}\right)-\Phi_0(M^{\mathbb Q}_0-S_0)-\frac{1}{2\Lambda}\int_{0}^T |M^{\mathbb Q}_t-S_t|^2 dt\right].\nonumber
\end{eqnarray}

Next,
let $C[0,T]$ be the space of continuous functions $z:[0,T]\rightarrow\mathbb R$ equipped with the uniform norm
$||z||:=\sup_{0\leq t \leq T} |z_t|$.
Denote by $\hat\Gamma\subset\Gamma$ the set of all continuous and
bounded processes $\theta=(\theta_t)_{t\in [0,T]}$ of the form $\theta=\tau(W)$ where
$\tau: C[0,T]\rightarrow C[0,T]$ is Lipschitz continuous and non-anticipative (i.e. $\tau_t(x)=\tau_t(y)$ if $x_{[0,t]}=y_{[0,t]}$).
From standard density arguments and the Lipschitz continuity of $f$
it follows that in order to complete the proof of the Lemma it is sufficient to show that
for any $(\theta,M)\in \hat\Gamma\times\mathcal M$ we have
\begin{eqnarray}\label{3.22}
&c(\Lambda,\alpha,\Phi_0,X)\nonumber\\
&\geq \mathbb E_{\mathbb P}\left[
f\left(S_T+\sigma\int_{0}^T \theta_t dt\right)-\frac{1}{2\alpha}\int_{0}^T \theta^2_t dt\right.\\
&\left.-\Phi_0(M_0-S_0)-\frac{1}{2\Lambda}\int_{0}^T
\left|S_0+\mu t+\sigma\int_{0}^t \theta_s ds-M_t\right|^2 dt\right].\nonumber
\end{eqnarray}

To this end let $(\theta,M)\in\hat\Gamma\times\mathcal M$ such that
$\theta=\tau(W)$ where $\tau$ as above.
Consider the stochastic differential equation (SDE)
\begin{equation}\label{add}
dY_t=dW_t-\tau_t(Y)dt, \ \ t\in [0,T]
\end{equation}
with the initial condition $Y_0=0$.
Theorem 2.1 from Chapter IX in Revuz \& Yor (1999)
yields that there exists a unique strong solution to the above SDE.
From the Girsanov theorem it follows that there exists a probability
measure $\mathbb Q\in\mathcal Q$ such that $W^{\mathbb Q}:=Y$ is a Brownian motion with respect to $\mathbb Q$.

From (\ref{2.bac}) and (\ref{add}) we obtain that the distribution of $(S_t)_{t\in [0,T]}$ under $\mathbb Q$ is equal to the distribution
 of $\left(S_t+\sigma \int_{0}^t\theta_s ds\right)_{t\in [0,T]}$ under $\mathbb P$.
  Moreover,
 $$\mathbb E_{\mathbb Q}\left[\frac{1}{\alpha}\log\left(\frac{d\mathbb Q}{d\mathbb P}\right)\right]=
 \mathbb E_{\mathbb Q}\left[\frac{1}{2\alpha}\int_{0}^T \tau^2_t(Y) dt\right]=
 \mathbb E_{\mathbb P}\left[\frac{1}{2\alpha}\int_{0}^T \theta^2_t dt\right].$$
Finally, choose $M^{\mathbb Q}\in\mathcal M^{\mathbb Q}$ such that
the law of $(W^{\mathbb Q}, M^{\mathbb Q})$ under $\mathbb Q$ is equal to the law of
$(W,M+\sigma W)$ under $\mathbb P$.
We conclude,
\begin{eqnarray*}
 &\mathbb E_{\mathbb Q}\left[X-\frac{1}{\alpha}\int_{0}^T\log\left(\frac{d\mathbb Q}{d\mathbb P}\right)-\Phi_0(M^{\mathbb Q}_0-S_0)-\frac{1}{2\Lambda}\int_{0}^T |M^{\mathbb Q}_t-S_t|^2 dt\right]\\
 &=\mathbb E_{\mathbb P}\left[
f\left(S_T+\sigma\int_{0}^T \theta_t dt\right)-\frac{1}{2\alpha}\int_{0}^T \theta^2_t dt\right.\\
&\left.-\Phi_0(M_0-S_0)-\frac{1}{2\Lambda}\int_{0}^T
\left|S_0+\mu t+\sigma\int_{0}^t \theta_s ds-M_t\right|^2 dt\right].
 \end{eqnarray*}
This together with (\ref{3.21}) gives (\ref{3.22}) as required.
  \end{proof}
 Next, denote by $L^2_0(\mathcal F_T,\mathbb P)$ the set of all random variables
of the form
\begin{equation}\label{3.rep}
  Z=\iota+\int_{0}^T\kappa_t dW_t
  \end{equation}
  for some  $\iota\in\mathbb R$ and a predictable and bounded process
  $\kappa=(\kappa_t)_{t\in [0,T]}$ such that for some (deterministic) $\epsilon>0$ 
  the restriction of $\kappa$ to the interval $[T-\epsilon,T]$ satisfies 
  $\kappa_{[T-\epsilon,T]}\equiv 0$.
\begin{lemma}\label{lem.11}
For any $Z\in L^2_0(\mathcal F_T,\mathbb P)$ there exists a constant $\hat C>0$ (may depend on $Z$) such that
for any $\Lambda \in (0,1)$
\begin{eqnarray*}\label{3.26}
    &\sup_{(\theta,M)\in \Gamma\times\mathcal M}\mathbb E_{\mathbb P}\left[
f\left(S_T+\sigma\int_{0}^T \theta_t dt\right)-\frac{1}{2\alpha(\Lambda)}\int_{0}^T \theta^2_t dt\right.\nonumber\\
&\left.-\Phi_0(M_0-S_0)-\frac{1}{2\Lambda}\int_{0}^T
\left|S_0+\mu t+\sigma\int_{0}^t \theta_s ds-M_t\right|^2 dt\right]\nonumber\\
&\geq\mathbb E_{\mathbb P}\left[f\left(S_0+\sigma W_T+Z\right)-\frac{\left(Z+\sigma\sqrt A\Phi_0\right)^2}{4\sigma\sqrt A}\right]+\frac{\sigma\sqrt A\Phi^2_0}{2}-\hat C\Lambda
    \end{eqnarray*}
  where, as before $\alpha(\Lambda)=\frac{A}{\Lambda}$.
\end{lemma}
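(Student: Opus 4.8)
The plan is to prove Lemma~\ref{lem.11} by exhibiting, for each fixed $Z\in L^2_0(\mathcal F_T,\mathbb P)$, an explicit feasible pair $(\theta,M)\in\Gamma\times\mathcal M$ whose value in the left-hand supremum already dominates the right-hand side up to an $O(\Lambda)$ error. Since the left side is a supremum, producing one good candidate suffices. Write $Z=\iota+\int_0^T\kappa_t\,dW_t$ with $\kappa_{[T-\epsilon,T]}\equiv 0$ and $\kappa$ bounded and predictable.

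**Choice of the candidate pair.** The idea is to read off the optimizer from the structure of the penalty. Looking at the integrand, the $\theta$-term couples a payoff shift $\sigma\int_0^T\theta_t\,dt$ against the quadratic penalties $\tfrac{1}{2\alpha}\int\theta_t^2$ and $\tfrac{1}{2\Lambda}\int|S_0+\mu t+\sigma\int_0^t\theta_s\,ds-M_t|^2$. I would first choose $M$ so that the last penalty is made small: take $M_t=S_0+\sigma\widehat W_t$ where $\widehat W$ is the martingale part that makes $M_t$ track $S_0+\mu t+\sigma\int_0^t\theta_s\,ds$ as closely as possible, forcing that final integral to be $O(\Lambda)$ and thereby killing the $\tfrac{1}{2\Lambda}$-prefactor. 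Concretely, I expect $M_t:=S_0+\sigma W_t+\int_0^t\kappa_s\,dW_s-(\text{drift correction})$, chosen so that $M_t-(S_0+\mu t+\sigma\int_0^t\theta_s\,ds)$ is of order $\Lambda$ uniformly, which makes the penalty term bounded by $\hat C\Lambda$. The requirement $\kappa_{[T-\epsilon,T]}\equiv 0$ guarantees $M$ extends to a genuine martingale on $[0,T)$ with the needed $L^2(dt\otimes\mathbb P)$ integrability, since the troublesome behaviour near $T$ is switched off.

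**Matching the target via the $\theta$-term.** With $M$ fixed, the remaining competition is between $f(S_0+\sigma W_T+\sigma\int_0^T\theta_t\,dt)$ and $\tfrac{1}{2\alpha(\Lambda)}\int_0^T\theta_t^2\,dt=\tfrac{\Lambda}{2A}\int_0^T\theta_t^2\,dt$. The plan is to pick $\theta$ so that $\sigma\int_0^T\theta_t\,dt$ equals the prescribed shift $Z+\sigma\sqrt A\Phi_0$ (up to the decomposition of $Z$), while keeping $\int_0^T\theta_t^2\,dt$ controlled. Here the factor $\tfrac{1}{4\sigma\sqrt A}$ on the right-hand side penalty is the tell-tale sign: it is exactly the Legendre-type cost one obtains when optimally spreading a fixed terminal displacement $y$ across $[0,T]$ against a quadratic running cost, and the constant $\tfrac{\sigma\sqrt A\Phi_0^2}{2}$ and $\Phi_0(M_0-S_0)$ bookkeeping should combine to produce the stated $\tfrac{\sigma\sqrt A\Phi_0^2}{2}$ offset. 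Taking expectations and using $\alpha(\Lambda)=A/\Lambda$, every discrepancy between the chosen feasible value and the target collapses into terms that are linear in $\Lambda$, absorbed into $\hat C\Lambda$.

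**Main obstacle.** I expect the delicate point to be the joint tuning of $\theta$ and $M$ near maturity: the penalty $\tfrac{1}{2\Lambda}$ blows up as $\Lambda\downarrow0$, so any mismatch of order larger than $\Lambda$ in the tracking term $S_0+\mu t+\sigma\int_0^t\theta_s\,ds-M_t$ would destroy the $O(\Lambda)$ error bound. Ensuring both that this tracking error is genuinely $O(\Lambda)$ \emph{uniformly in} $t$ and that the resulting $M$ still lies in $\mathcal M$ (finite $\|M\|_{L^2(dt\otimes\mathbb P)}$ on $[0,T)$) is where the hypothesis $\kappa_{[T-\epsilon,T]}\equiv 0$ and the boundedness of $\kappa$ must be used carefully. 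Once the tracking error and the $\theta$-cost are each controlled, assembling the inequality is a routine expectation computation, so the heart of the argument is the explicit construction and the estimate of the $\tfrac{1}{2\Lambda}$-penalty.
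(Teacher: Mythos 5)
Your high-level strategy (exhibit one feasible pair $(\theta,M)$, since the left-hand side is a supremum) is the same as the paper's, but the central step of your construction fails, and it cannot be repaired. You claim that $M$ can be chosen so that the tracking error $M_t-\bigl(S_0+\mu t+\sigma\int_0^t\theta_s\,ds\bigr)$ is $O(\Lambda)$ uniformly in $t$, making the $\tfrac{1}{2\Lambda}$-penalty of order $\Lambda$. This is impossible. Since $M$ is a martingale, $\mathbb E_{\mathbb P}[M_t]\equiv M_0$ for all $t$, whereas your $\theta$ must satisfy $\sigma\int_0^T\theta_t\,dt = Z-\mu T$ to produce the payoff shift, so the expectation of the tracked process moves from $S_0$ at $t=0$ to $S_0+\mathbb E_{\mathbb P}[Z]=S_0+\iota$ near $t=T$. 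Uniform $O(\Lambda)$ tracking would force $|\iota|\le C\Lambda$, which is false for a fixed $Z$ with $\iota\neq 0$ once $\Lambda$ is small. The martingale part of $Z$ causes the dual problem: conditioning at time $s$ shows that $\sigma\int_0^s\theta_r\,dr$ would have to follow the martingale $\int_0^s\kappa_r\,dW_r$ within $O(\Lambda)$, and an absolutely continuous process can only shadow a Brownian-type martingale at that accuracy by reacting on time scale $\sim\Lambda^2$, which forces $\int_0^T\theta_t^2\,dt\sim\Lambda^{-2}$ and hence $\tfrac{\Lambda}{2A}\int_0^T\theta_t^2\,dt\sim\Lambda^{-1}\to\infty$: killing one penalty blows up the other. (A minor additional defect: subtracting a time-dependent ``drift correction'' from a martingale produces a non-martingale, so your candidate $M$ is not even in $\mathcal M$.)

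The decisive sanity check is that your claimed estimate is strictly stronger than the Lemma and is in fact false: it would give the lower bound $\mathbb E_{\mathbb P}\left[f\left(S_0+\sigma W_T+Z\right)\right]-O(\Lambda)$ \emph{without} the subtracted term $\frac{\left(Z+\sigma\sqrt A\Phi_0\right)^2}{4\sigma\sqrt A}$; feeding this into the proof of Proposition \ref{prop3.1} and taking the supremum over $Z$ (e.g.\ over constants $Z=\iota$) would yield $\liminf_{\Lambda\downarrow 0}c(\Lambda,A/\Lambda,\Phi_0,X)=+\infty$, contradicting the upper bound of Proposition \ref{prop2}. The quadratic term you try to avoid is an unavoidable order-one cost, not an $O(\Lambda)$ error, and this is exactly where your attribution of the factor $\frac{1}{4\sigma\sqrt A}$ goes wrong: spreading a displacement against the running cost $\tfrac{\Lambda}{2A}\int\theta_t^2\,dt$ alone costs only $O(\Lambda)$; the factor $\frac{1}{4\sigma\sqrt A}$ is the value of the \emph{joint} minimization of the $\theta$-cost and the $\tfrac{1}{2\Lambda}$-tracking penalty. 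This joint problem is what the paper isolates in Proposition \ref{lem5.1}: the Euler--Lagrange equation $\ddot\delta_t-\rho\delta_t\equiv\mathrm{const}$ with $\rho=\sigma^2A/\Lambda^2$ produces $\cosh/\sinh$ minimizers with a boundary layer of width $\sim\Lambda/(\sigma\sqrt A)$, whose minimal value is $\frac{\left(x+\sigma\sqrt A\phi\right)^2}{4\sigma\sqrt A}-\frac{\sigma\sqrt A\phi^2}{2}+O(\Lambda)$. The paper then instantiates this optimizer twice --- as a deterministic path $\nu$ for the constant part $\iota$, and as kernels $l_{\cdot,s}$ for each stochastic increment $\kappa_s\,dW_s$ --- and assembles the bound via stochastic Fubini and the It\^o isometry. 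Any correct proof must confront this trade-off; a construction that tries to make the tracking penalty itself negligible cannot work.
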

\begin{proof}
Let $Z$ given by (\ref{3.rep}) and
let $\Xi$ be the map from Proposition \ref{lem5.1}. Define the deterministic function $\nu:[0,T]\rightarrow\mathbb R$ by
$\nu:=\Xi_T(\Lambda,\iota,\Phi_0)$
and for any $s<T$ define the stochastic process $(l_{\cdot,s})_{\cdot\in [s,T]}$ by
$(l_{\cdot,s})_{\cdot\in [s,T]}=\Xi_{T-s}(\Lambda, \kappa_s,0)$.

Next, introduce $(\theta,M)\in \Gamma\times \mathcal M$
 \begin{eqnarray*}
 &\theta_t:=\frac{\dot{\nu}_t-\mu}{\sigma}+\frac{1}{\sigma}\int_{0}^t \frac{\partial l_{t,s}}{\partial t} d W_s, \ \ t\in [0,T],\\
&M_t:=S_0+\frac{\int_{0}^T \nu_t dt-\Phi_0\Lambda}{T}+ \int_{0}^t \left(\sigma+\frac{1}{T-s}\int_{s}^T l_{v,s} dv\right) d W_s, \ \ t\in [0,T].
  \end{eqnarray*}
 Observe that from the definition of $\Xi$ we have
 $$\nu_0=0, \ \ \nu_T=\iota \ \  \mbox{and} \ \ l_{s,s}=0, \ \ l_{T,s}=\kappa_s \ \ \forall s.$$
 This together with the Fubini theorem, the It\^{o} Isometry,
 (\ref{2.bac}) and (\ref{3.rep}) gives
  \begin{eqnarray}\label{3.27}
&\mathbb E_{\mathbb P}\left[
f\left(S_T+\sigma\int_{0}^T \theta_t dt\right)-\frac{1}{2\alpha(\Lambda)}\int_{0}^T \theta^2_t dt
-\Phi_0(M_0-S_0)-\right.\nonumber\\
&\left.\frac{1}{2\Lambda}\int_{0}^T
\left|S_0+\mu t+\sigma\int_{0}^t \theta_s ds-M_t\right|^2 dt\right]\nonumber\\
&=\mathbb E_{\mathbb P}\left[f\left(S_0+\sigma W_T+Z\right)\right]\\
&+\frac{\mu \Lambda \iota}{\sigma^2 A}-\frac{\mu^2\Lambda }{2  \sigma^2 A}-
I(\Lambda, \nu)-\int_{0}^{T-\epsilon} \mathbb E_{\mathbb P} \left[J_s(\Lambda,l) \right]ds\nonumber
\end{eqnarray}
where
$$
I(\Lambda,\nu):=\frac{\Lambda}{ 2 \sigma^2 A}\int_{0}^T\dot{\nu}_t^2 dt+
\frac{1}{2\Lambda}\left(\int_{0}^T \nu^2_t dt-\frac{1}{T}\left(\Phi_0\Lambda-\int_{0}^T\nu_t dt\right)^2
\right)
$$
and
$$J_s(\Lambda,l):=\frac{\Lambda}{2 \sigma^2 A}\int_{s}^T \left(\frac{\partial l_{t,s}}{\partial t}\right)^2 dt+ \frac{1}{2\Lambda}
\left(\int_{s}^T l^2_{t,s} dt-\frac{1}{T-s}\left(\int_{s}^T l_{t,s} dt\right)^2\right).
$$
From Proposition \ref{lem5.1} there exists a constant $C>0$ (may depend on $\iota$ and $\kappa$) such that
\begin{equation}\label{3.28}
\left|I(\Lambda, \nu)-\frac{\left(\iota+\sigma\sqrt A\Phi_0\right)^2}{4\sigma\sqrt A}+\frac{\sigma\sqrt A\Phi^2_0}{2}\right|\leq C\Lambda
\end{equation}
and for any $s\in [0,T-\epsilon]$
\begin{equation}\label{3.29}
\left|J_s(\Lambda,l)-\frac{\kappa^2_s}{4\sigma\sqrt A}\right|\leq C\Lambda.
\end{equation}
By combining the It\^{o} Isometry and (\ref{3.27})--(\ref{3.29}) we complete the proof.
\end{proof}
We now have all the pieces in place that we need for the
\textbf{completion of the proof of Proposition \ref{prop3.1}}.
\begin{proof}
Recall the definition of $g$ given in  (\ref{def1}). From the Lipschitz continuity of $f$ it follows that there exists a bounded (measurable)
function $\zeta:\mathbb R\rightarrow\mathbb R$ such that
\begin{equation}\label{final1}
g(x)= f\left(x+\zeta(x)\right)-\frac{\zeta^2(x)}{4\sigma \sqrt A }, \ \ \ \forall x\in\mathbb R.
\end{equation}
Choose a sequence $Z_n\in L^2_0(\mathcal F_T,\mathbb P)$, $n\in\mathbb N$ such that
$$\lim_{n\rightarrow\infty} Z_n=\zeta(S_0-\sigma\sqrt A\Phi_0+ \sigma W_T)-\sigma \sqrt A\Phi_0$$
where the limit is in $L^2(\mathbb P).$
From Lemmas \ref{lem.10}--\ref{lem.11} and (\ref{final1}) we obtain
\begin{eqnarray*}
&\lim\inf_{\Lambda\downarrow 0}c\left(\Lambda,A/\Lambda,\Phi_0,X\right)\\
&\geq \sup_{n\in\mathbb N}
\mathbb E_{\mathbb P}\left[f\left(S_0+\sigma W_T+Z_n\right)-\frac{\left(Z_n+\sigma\sqrt A\Phi_0\right)^2}{4\sigma\sqrt A}\right]+\frac{\sigma\sqrt A\Phi^2_0}{2}\\
&\geq\mathbb E_{\mathbb P}\left[g\left(S_0-\sigma\sqrt A\Phi_0+\sigma W_T\right)\right]+
\frac{\sigma\sqrt A\Phi^2_0}{2}\\
&=u\left(0,S_0-\sigma\sqrt A\Phi_0\right)+\frac{\sigma\sqrt A \Phi^2_0 }{2}.
\end{eqnarray*}
\end{proof}

\section{Proof of the Upper Bound}\label{sec:4}
In order to complete the proof of Theorem \ref{thm.1} it remains to establish the following result.
\begin{proposition}\label{prop2}
Recall the trading strategies $\Phi^{\Lambda}$, $\Lambda>0$ given by (\ref{2.3}).
Then,
$$\lim\sup_{\Lambda\downarrow 0}\frac{\Lambda}{A}
 \log\left(\mathbb E_{\mathbb P}\left[\exp\left(\frac{A}{\Lambda}\left(X-V^{\Phi^{\Lambda}}_T\right)\right)\right]
\right)\leq u\left(0,S-\sigma\sqrt A\Phi_0\right)+\frac{\sigma\sqrt A \Phi^2_0 }{2}.$$
\end{proposition}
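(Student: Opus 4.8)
The plan is to establish an \emph{exact} pathwise decomposition of $X-V^{\Phi^{\Lambda}}_T$ whose leading term is deterministic and converges to the target, whose fluctuating part is a Dol\'eans--Dade exponent that disappears under a change of measure, and whose two remaining terms can be bounded from above by deterministic quantities vanishing as $\Lambda\downarrow0$. First I would set $P_t:=S_t-\sigma\sqrt A F^{\Lambda}_t$ and $\beta_t:=\sigma\bigl(\Upsilon^{\Lambda}_t-\Phi^{\Lambda}_t\bigr)$. Since $F^{\Lambda}$ is absolutely continuous, $\langle P\rangle_t=\sigma^2 t$, and applying It\^o to $u(t,P_t)$ while cancelling $\partial_t u+\tfrac{\sigma^2}{2}\partial_{xx}u$ via (\ref{PDE}) gives
\[
du(t,P_t)=\Upsilon^{\Lambda}_t\bigl(\mu-\sigma\sqrt A\,\dot F^{\Lambda}_t\bigr)\,dt+\sigma\Upsilon^{\Lambda}_t\,dW_t
\]
on $[0,T)$, which I would extend to $t=T$ using continuity of $u$, boundedness of $\partial_x u$ and integrability of $\dot F^{\Lambda}$ across the maturity layer.

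Substituting this into (\ref{2.1}), adding and subtracting $u(t,P_t)$, and using $u(T,\cdot)=g$ produces a decomposition whose finite-variation part is exactly $-\int_0^T\dot\Psi_t\,dt$ for the potential $\Psi_t:=\tfrac{\sigma\sqrt A}{2}\tanh(\sqrt{\rho}(T-t))(F^{\Lambda}_t)^2$. Indeed, using $\dot\Phi^{\Lambda}_t=\sqrt{\rho}\bigl(\tanh(\tfrac{\sqrt{\rho}(T-t)}{2})\Upsilon^{\Lambda}_t-F^{\Lambda}_t\bigr)$ (equivalent to (\ref{4.0+})) together with (\ref{ODE}), a direct computation yields the pointwise identity
\[
-\sigma\sqrt A\,\Upsilon^{\Lambda}_t\dot F^{\Lambda}_t+\frac{\Lambda}{2}\bigl(\dot\Phi^{\Lambda}_t\bigr)^2+\frac{A}{2\Lambda}\beta_t^2=-\dot\Psi_t,
\]
the coefficient of $(\Upsilon^{\Lambda}_t)^2$ cancelling because $\tfrac12+\tfrac12\tanh^2(\tfrac a2)=\tfrac{\cosh a}{\cosh a+1}$ with $a=\sqrt{\rho}(T-t)$. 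Since $\Psi_T=0$ and $\Psi_0=\tfrac{\sigma\sqrt A\Phi_0^2}{2}\coth(\sqrt{\rho}T)$, this gives the exact identity
\begin{align*}
X-V^{\Phi^{\Lambda}}_T&=u(0,P_0)+\Psi_0+\bigl(f(S_T)-g(P_T)\bigr)\\
&\quad+\frac{\mu}{\sigma}\int_0^T\beta_t\,dt+\left(\int_0^T\beta_t\,dW_t-\frac{A}{2\Lambda}\int_0^T\beta_t^2\,dt\right).
\end{align*}

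As $\Lambda\downarrow0$ we have $\coth(\sqrt{\rho}T)\to1$ and $P_0=S_0-\sigma\sqrt A\Phi_0\coth(\sqrt{\rho}T)\to S_0-\sigma\sqrt A\Phi_0$, so $u(0,P_0)+\Psi_0$ tends to the asserted limit. Because $\Upsilon^{\Lambda}$ and $F^{\Lambda}$ are bounded by deterministic constants, $\beta$ is bounded, Novikov's condition holds trivially, and $Z_T:=\mathcal E\bigl(\tfrac{A}{\Lambda}\int\beta\,dW\bigr)_T$ is a true $\mathbb P$--martingale with $\mathbb E_{\mathbb P}[Z_T]=1$. Writing $\exp(\tfrac A\Lambda(X-V^{\Phi^{\Lambda}}_T))=\exp(\tfrac A\Lambda(u(0,P_0)+\Psi_0))\exp(\tfrac A\Lambda\mathcal R_{\Lambda})\,Z_T$ with $\mathcal R_{\Lambda}:=\bigl(f(S_T)-g(P_T)\bigr)+\tfrac\mu\sigma\int_0^T\beta_t\,dt$, it then suffices to find deterministic $r_{\Lambda}\to0$ with $\mathcal R_{\Lambda}\le r_{\Lambda}$ almost surely; taking expectations gives $\tfrac\Lambda A\log\mathbb E_{\mathbb P}[\exp(\tfrac A\Lambda(X-V^{\Phi^{\Lambda}}_T))]\le u(0,P_0)+\Psi_0+r_{\Lambda}$, and letting $\Lambda\downarrow0$ closes the bound.

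The main obstacle is the almost-sure upper bound on $f(S_T)-g(P_T)$. The crude estimate $f(S_T)-g(P_T)\le\tfrac{\sigma\sqrt A}{4}(F^{\Lambda}_T)^2$ coming from the definition (\ref{def1}) of $g$ does \emph{not} vanish and is useless here; the correct observation is that the auxiliary position $F^{\Lambda}_T$ approaches the terminal delta $g'(P_T)\in\{0,\Theta\}$, and the piecewise-linear form (\ref{2.form}) then forces $f(S_T)-g(P_T)\le0$ up to an error governed by $|F^{\Lambda}_T-g'(P_T)|$ inside the maturity boundary layer $\{a=O(1)\}$. Making this quantitative requires a delicate analysis of (\ref{ODE}) near $t=T$, where $\partial_{xx}u$ blows up and $\coth,\tanh$ are singular, and this is precisely the step that exploits the explicit structure of the payoff, as anticipated in Remark \ref{rem0}. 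The drift remainder is comparatively harmless: since the fast mean reversion gives $|\Upsilon^{\Lambda}_t-\Phi^{\Lambda}_t|=O(\Lambda)$ away from maturity while the boundary layer has width $O(\Lambda)$, one has $\int_0^T|\beta_t|\,dt=O(\Lambda)$ deterministically, so $\tfrac\mu\sigma\int_0^T\beta_t\,dt$ contributes only an $O(\Lambda)$ term to $r_{\Lambda}$.
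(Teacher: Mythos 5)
Your pathwise decomposition is correct, and it is in fact the paper's own argument in different clothing: with $\Psi_t=\frac{\sigma\sqrt A}{2}\tanh(\sqrt\rho(T-t))(F^{\Lambda}_t)^2=\frac{\sigma\sqrt A}{2}F^{\Lambda}_t\Phi^{\Lambda}_t$, your identity is exactly the statement (paper, Step III) that $M^{\Lambda}_t:=\exp\bigl(\frac{A}{\Lambda}(u(t,S_t-\sigma\sqrt A F^{\Lambda}_t)+\frac{\sigma\sqrt A F^{\Lambda}_t\Phi^{\Lambda}_t}{2}-V^{\Phi^{\Lambda}}_t)\bigr)$ has driftless dynamics $dM^{\Lambda}_t/M^{\Lambda}_t=\frac{A}{\Lambda}(\Upsilon^{\Lambda}_t-\Phi^{\Lambda}_t)\,dS_t$, and your cancellation algebra checks out; whether one then invokes Novikov (which needs the boundedness of $F^{\Lambda}$, asserted but not proved by you) or, as the paper does, simply the fact that a positive local martingale is a supermartingale, is immaterial. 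The genuine problem is that you close neither of the two remaining estimates, and that is where the content of the proposition lies. For $f(S_T)-g(P_T)$ you explicitly defer the argument ("making this quantitative requires a delicate analysis of (\ref{ODE}) near $t=T$"), so as written this is a gap, not a proof; worse, the route you sketch (convergence of $F^{\Lambda}_T$ to the terminal delta $g'(P_T)$ through a boundary layer where $\partial_{xx}u$ blows up) is far harder than what is needed. The paper's Step I does it in three lines with no boundary-layer analysis at all: integrating (\ref{ODE}) in the form $\frac{d}{dt}\bigl[F^{\Lambda}_t/\cosh(\sqrt\rho(T-t))\bigr]=\frac{\sqrt\rho}{2\cosh^2(\sqrt\rho(T-t)/2)}\Upsilon^{\Lambda}_t$, using $|\Upsilon^{\Lambda}|\le|\Theta|$ and $\int_0^T\frac{\sqrt\rho\,ds}{2\cosh^2(\sqrt\rho(T-s)/2)}=\tanh(\sqrt\rho T/2)\le 1$, gives the crude bound $|F^{\Lambda}_T|\le|\Theta|+|\Phi_0|/\sinh(\sqrt\rho T)$; then, since $f(S_T)=\max\bigl(0,\Theta(P_T-K)+\sigma\sqrt A\,\Theta F^{\Lambda}_T\bigr)$ while $g(P_T)=\max\bigl(0,\Theta(P_T-K)+\sigma\sqrt A\,\Theta^2\bigr)$ carries the slack $\sigma\sqrt A\Theta^2$, the inequality $\Theta F^{\Lambda}_T\le\Theta^2+|\Theta\Phi_0|/\sinh(\sqrt\rho T)$ yields $f(S_T)-g(P_T)\le\sigma\sqrt A|\Phi_0\Theta|/\sinh(\sqrt\rho T)$, deterministic and vanishing. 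No statement about $F^{\Lambda}_T$ approaching $g'(P_T)$ is needed; only $|F^{\Lambda}_T|\lesssim|\Theta|$ is.

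The second gap is that your drift estimate rests on a false claim. It is not true that $|\Upsilon^{\Lambda}_t-\Phi^{\Lambda}_t|=O(\Lambda)$ away from maturity, deterministically or otherwise: $\Upsilon^{\Lambda}$ is a diffusion with order-one volatility, and a tracker mean-reverting at rate $\sqrt\rho\sim\Lambda^{-1}$ lags it by the fluctuation of $\Upsilon^{\Lambda}$ over the relaxation time $1/\sqrt\rho$, i.e.\ by a \emph{random} quantity of typical size $\rho^{-1/4}\sim\sqrt\Lambda$. Consequently $\int_0^T|\beta_t|\,dt$ is neither deterministic nor $O(\Lambda)$, and no deterministic vanishing pathwise bound on it exists (that would require a deterministic modulus of continuity for Brownian paths). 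What is true — and what your argument actually needs, since only the \emph{signed} integral $\frac{\mu}{\sigma}\int_0^T\beta_t\,dt$ enters $\mathcal R_\Lambda$ — is the paper's Step II: an exact Fubini computation against the hyperbolic kernel shows $\bigl|\int_0^T(\Upsilon^{\Lambda}_t-\Phi^{\Lambda}_t)\,dt\bigr|\le\tilde C\Lambda$ pathwise, the cancellation in the signed integral being essential. This fix is available to you, but the reasoning as stated is wrong, and the distinction matters: any merely random remainder would be fatal after multiplication by $A/\Lambda$ in the exponent.
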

\begin{proof}
The proof will be done in three steps.\\
\textbf{Step I: }\
In this step we use the specific structure of the payoff $f$ given by
(\ref{2.form}).
Let us
show that for any $\Lambda>0$
\begin{equation}\label{4.--}
g\left(S_T-\sigma\sqrt A F^{\Lambda}_T\right)\geq f(S_T)- \frac{\sigma\sqrt A |\Phi_0  \Theta|}{\sinh\left(\sqrt{\rho(\Lambda)} T\right)}
\end{equation}
where, as before $\rho(\Lambda):=\frac{\sigma^2 A}{\Lambda^2}$.

Fix $\Lambda>0$. From (\ref{ODE})
$$\frac{d}{dt}\left[\frac{F^{\Lambda}_t}{\cosh\left(\sqrt{\rho(\Lambda)} (T-t)\right)}\right]=
\frac{\sqrt{\rho(\Lambda)}}{2 \cosh^2\left(\frac{\sqrt{\rho(\Lambda)} (T-t)}{2}\right)}\Upsilon^{\Lambda}_t, \ \ \ t\in [0,T]
$$
where, recall that
$\Upsilon^{\Lambda}_t:=\frac{\partial u}{\partial x}\left(t,S_t-\sigma\sqrt A F^{\Lambda}_t\right)$,  $t\in [0,T)$.
Clearly,
$|\Upsilon^{\Lambda}_t|\leq \Theta$, and so,
\begin{eqnarray*}
&|F^{\Lambda}_T|\leq \left|\frac{F^{\Lambda}_0}{\cosh\left(\sqrt{\rho(\Lambda)} T\right)}\right|+\Theta\int_{0}^T \frac{\sqrt{\rho(\Lambda)}}
{2 \cosh^2\left(\frac{\sqrt{\rho(\Lambda)} (T-t)}{2}\right)} dt\\
&\leq
\left|\frac{\Phi_0}{\sinh\left(\sqrt{\rho(\Lambda)} T\right)}\right|+|\Theta| .
\end{eqnarray*}
This together with (\ref{2.form}) and (\ref{def1}) gives (\ref{4.--}).\\
\textbf{Step II:}
In this step we prove that there exists a constant $\tilde C>0$ such that
\begin{equation}\label{4.0}
\left|\int_{0}^ T\Upsilon^{\Lambda}_t dt-
\int_{0}^T \Phi^{\Lambda}_t dt\right|\leq \tilde C\Lambda, \ \ \forall \Lambda>0.
\end{equation}
Fix $\Lambda>0$. From (\ref{4.0+})
\begin{eqnarray*}
&\frac{d}{dt}\left[\frac{\Phi^{\Lambda}_t}{\cosh\left(\sqrt{\rho(\Lambda)} (T-t)\right)}\right]\\
&=
\frac{\sqrt{\rho(\Lambda)}}{2 \cosh^2\left(\frac{\sqrt{\rho(\Lambda)} (T-t)}{2}\right)}
\tanh\left(\frac{\sqrt{\rho(\Lambda)} (T-t)}{2}\right)\Upsilon^{\Lambda}_t, \ \ \ t\in [0,T].
\end{eqnarray*}
We get
\begin{eqnarray*}
&\Phi^{\Lambda}_t=\Phi_0\frac{\cosh\left(\sqrt{\rho(\Lambda)}(T-t)\right)}{\cosh(\sqrt{\rho(\Lambda)} T)}
\\
&+\int_{0}^t
\frac{\sqrt{\rho(\Lambda)}\cosh\left(\sqrt {\rho(\Lambda)} (T-t)\right)}{2 \cosh^2\left(\frac{\sqrt{\rho(\Lambda)}
(T-s)}{2}\right)}\tanh\left(\frac{\sqrt{\rho(\Lambda)} (T-s)}{2}\right)\Upsilon^{\Lambda}_s ds
\end{eqnarray*}
and so, from the Fubini theorem
$$\int_{0}^T \Phi^{\Lambda}_t dt-\int_{0}^T \Upsilon^{\Lambda}_t dt=\Phi_0\frac{\tanh \left(\sqrt{\rho(\Lambda)} T\right)}{\sqrt{\rho(\Lambda)}}
-\int_{0}^T \frac{\Upsilon^{\Lambda}_s}{\cosh^2\left(\frac{\sqrt{\rho(\Lambda)} (T-s)}{2}\right)} ds.
$$
This together with the simple integral
$$\int_{0}^T \frac{ds}{\cosh^2\left(\frac{\sqrt{\rho(\Lambda)} (T-s)}{2}\right)} =
\frac{2\tanh\left(\frac{\sqrt{\rho(\Lambda)} T}{2}\right)}{\sqrt{\rho(\Lambda)}}$$
 and the inequality
$|\Upsilon^{\Lambda}_t|\leq \Theta$ gives (\ref{4.0}).
\\
\textbf{Step III:}
In this step we complete the proof.
Fix $\Lambda>0$ and introduce the process
$$M^{\Lambda}_t:=\exp\left(\frac{A}{\Lambda}\left(u\left(t,S_t-\sigma\sqrt A F^{\Lambda}_t\right)
+\frac{\sigma\sqrt A F^{\Lambda}_t\Phi^{\Lambda}_t}{2}-
V^{\Phi^{\Lambda}}_t\right)\right), \ \ t\in [0,T].$$

From the It\^{o} formula,
(\ref{2.1}), (\ref{PDE})--(\ref{2.3}) and (\ref{4.0+}) we obtain
\begin{eqnarray*}\label{4.2}
&\frac{dM^{\Lambda}_t}{M^{\Lambda}_t}=\frac{A}{\Lambda}\left(
\Upsilon^{\Lambda}_t-
\Phi^{\Lambda}_t \right) dS_t+\frac{\sigma^2 A^2}{2\Lambda^2}\left(
\Upsilon^{\Lambda}_t-
\Phi^{\Lambda}_t \right)^2 dt\nonumber\\
&-\frac{\sigma^2 A^2}{\Lambda^2}
\Upsilon^{\Lambda}_t
\left(\frac{\cosh\left(\sqrt{\rho(\Lambda)} (T-t)\right)}{2 \cosh^2\left(\frac{\sqrt{\rho(\Lambda)} (T-t)}{2}\right)}\Upsilon^{\Lambda}_t-\Phi^{\Lambda}_t\right)dt\nonumber\\
&+\frac{\sigma^2 A^2}{2\Lambda^2}\left(\tanh\left(\frac{\sqrt{\rho(\Lambda)} (T-t)}{2}\right)\Upsilon^{\Lambda}_t-
\coth\left(\sqrt{\rho(\Lambda)} (T-t)\right)\Phi^{\Lambda}_t\right)^2dt\nonumber\\
&+\frac{\sigma^2 A^2}{2\Lambda^2}\Phi^{\Lambda}_t \left(\frac{\cosh\left(\sqrt{\rho(\Lambda)} (T-t)\right)}
{2 \cosh^2\left(\frac{\sqrt{\rho(\Lambda)} (T-t)}{2}\right)}\Upsilon^{\Lambda}_t-\Phi^{\Lambda}_t\right)dt\nonumber\\
&+\frac{\sigma^2 A^2}{2\Lambda^2}\coth\left(\sqrt{\rho(\Lambda)} (T-t)\right)\Phi^{\Lambda}_t\\
&\times\left(\tanh\left(\frac{\sqrt{\rho(\Lambda)}(T-t)}{2}\right)\Upsilon^{\Lambda}_t-
\coth\left(\sqrt{\rho(\Lambda)} (T-t)\right)\Phi^{\Lambda}_t\right)\nonumber\\\\
&=\frac{A}{\Lambda}\left(\Upsilon^{\Lambda}_t-\Phi^{\Lambda}_t \right) dS_t
\end{eqnarray*}
where the last equality follows from simple calculations.

Hence, from (\ref{2.bac}) it follows that the process
$$N^{\Lambda}_t:=\exp\left(-\frac{\mu A\int_{0}^t \left(\Upsilon^{\Lambda}_t-\Phi^{\Lambda}_s\right)ds}{\Lambda}\right)M^{\Lambda}_t, \ \ t\in [0,T]$$
is a local--martingale, and so from the obvious inequality $N^{\Lambda}>0$
we conclude that this process is a super--martingale.

Finally,
\begin{eqnarray*}
&\frac{\Lambda}{A}\log\left(\mathbb E_{\mathbb P}\left[\exp\left(\frac{A}{\Lambda}\left(X-V^{\Phi^{\Lambda}}_T\right)\right)\right]\right)\\
&\leq \frac{\Lambda}{A}\log\left(\mathbb E_{\mathbb P}[M^{\Lambda}_T]\right)+\frac{\sigma\sqrt A |\Phi_0  \Theta|}{\sinh\left(\sqrt{\rho(\Lambda)} T\right)}\\
&\leq  \frac{\Lambda}{A}\log\left(\mathbb E_{\mathbb P}[N^{\Lambda}_T]\right)+\tilde C|\mu|\Lambda+\frac{\sigma\sqrt A |\Phi_0  \Theta|}{\sinh\left(\sqrt{\rho(\Lambda)} T\right)}\\
&\leq \frac{\Lambda}{A}\log \left(N^{\Lambda}_0\right)+\tilde C|\mu|\Lambda+\frac{\sigma\sqrt A |\Phi_0  \Theta|}{\sinh\left(\sqrt{\rho(\Lambda)} T\right)}\\
&=u\left(0,S_0-\sigma\sqrt A\Phi_0\coth\left(\sqrt{\rho(\Lambda)} T\right)\right)+\frac{\sigma\sqrt A\Phi^2_0 \coth \left(\sqrt{\rho(\Lambda)} T\right)}{2}\\
&+\tilde C|\mu|\Lambda+\frac{\sigma\sqrt A |\Phi_0  \Theta|}{\sinh\left(\sqrt{\rho(\Lambda)} T\right)}.
\end{eqnarray*}
The first inequality follows from (\ref{4.--})
and the relations $u(T,\cdot)=g(\cdot)$, $\Phi^{\Lambda}_T=0$.
 The second inequality is due to (\ref{4.0}).
 The super--martingale property of $N^{\Lambda}$ gives the third inequality.
 The equality is due to (\ref{2.3}).

By taking $\Lambda\downarrow 0$ we complete the proof.
\end{proof}

\section{Auxiliary Result}\label{sec:5}
For any $\mathbb T\in (0,T]$ and $x\in\mathbb R$ let $C_{0,x}[0,\mathbb T]$
be the space of all continuous functions
$z:[0,\mathbb T]\rightarrow \mathbb R$ which satisfy $z_0=0$ and $z_{\mathbb T}=x$.
 \begin{proposition}\label{lem5.1}
For any $\mathbb T\in (0,T]$ there exists a measurable map
$\Xi_{\mathbb T}:(0,1)\times \mathbb R^2\rightarrow C[0,\mathbb T)$
such that for any $\Lambda\in (0,1)$ and $x,\phi\in\mathbb R$ the continuous function
$\Xi_{\mathbb T}(\Lambda,x,\phi)\in C_{0,x}[0,\mathbb T]$ is the unique minimizer for the optimization problem
\begin{equation}\label{5.1-}
\min_{\delta\in C_{0,x}[0,\mathbb T]}\left[\frac{\Lambda}{2 \sigma^2 A}\int_{0}^{\mathbb T}\dot{\delta}_t^2 dt+
\frac{1}{2\Lambda}\left(\int_{0}^{\mathbb T}\delta^2_t dt-\frac{1}{\mathbb T}\left(\phi\Lambda-\int_{0}^{\mathbb T}\delta_t dt\right)^2\right)\right].
\end{equation}
Moreover, denote the corresponding value by $V_{\mathbb T}(\Lambda,x,\phi)$. Then, for any $\epsilon>0$ and a compact set $K\subset \mathbb R^2$ there exists a constant
$\hat C$ (may depend on $\epsilon$ and $K$) such that
\begin{equation}\label{5.1+}
\left|V_{\mathbb T}(\Lambda,x,\phi)-\frac{\left(x+\sigma\sqrt A\phi\right)^2}{4\sigma\sqrt A}+\frac{\sigma\sqrt A\phi^2}{2}\right|\leq \hat C\Lambda, \ \ \forall (\mathbb T,\Lambda, x,\phi)\in [\epsilon,T]\times (0,1)\times K.
\end{equation}
\end{proposition}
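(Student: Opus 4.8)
The plan is to attack the minimization problem \eqref{5.1-} directly via the calculus of variations. The functional is quadratic and strictly convex in $\delta$, so a unique minimizer exists and is characterized by the Euler--Lagrange equation together with the boundary conditions $\delta_0=0$, $\delta_{\mathbb T}=x$. First I would handle the nonlocal term $\frac{1}{\mathbb T}(\phi\Lambda-\int_0^{\mathbb T}\delta_t\,dt)^2$ by treating the quantity $m:=\int_0^{\mathbb T}\delta_t\,dt$ as a fixed parameter: for each candidate value of $m$, the reduced problem is a standard local quadratic functional whose Euler--Lagrange equation is the linear second-order ODE
\begin{equation*}
\frac{\Lambda}{\sigma^2 A}\,\ddot{\delta}_t = \delta_t - \frac{1}{\mathbb T}\bigl(\phi\Lambda - m\bigr),
\end{equation*}
i.e.\ $\ddot{\delta}_t = \kappa^2(\delta_t - \beta)$ with $\kappa:=\sqrt{\sigma^2 A/\Lambda}=\sqrt\rho\,\sigma/\sigma = \sqrt{\sigma^2A}/\sqrt\Lambda$ and constant right-hand side $\beta$. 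This is solved explicitly by linear combinations of $\cosh(\kappa t)$ and $\sinh(\kappa t)$ shifted by $\beta$; imposing the two boundary conditions fixes the two constants. I would then close the system by enforcing the self-consistency relation $m=\int_0^{\mathbb T}\delta_t\,dt$, which is a single linear equation in $m$, solvable in closed form. This yields an explicit formula for $\Xi_{\mathbb T}(\Lambda,x,\phi)$, and measurability in $(\Lambda,x,\phi)$ is immediate from the explicit hyperbolic expressions.

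Having the explicit minimizer, I would compute the optimal value $V_{\mathbb T}(\Lambda,x,\phi)$. A convenient shortcut is to integrate the Euler--Lagrange identity against $\delta$ itself: using $\frac{\Lambda}{\sigma^2A}\ddot\delta=\delta-\beta$ and integration by parts, the kinetic and potential integrals collapse so that $V_{\mathbb T}$ can be expressed through boundary data $\delta_{\mathbb T}\dot\delta_{\mathbb T}-\delta_0\dot\delta_0$ plus lower-order terms. In the relevant regime $\Lambda\downarrow 0$ we have $\kappa\mathbb T=\sqrt{\sigma^2A}\,\mathbb T/\sqrt\Lambda\to\infty$, so all factors $\coth(\kappa\mathbb T),\tanh(\kappa\mathbb T)\to 1$ exponentially fast while $1/\sinh(\kappa\mathbb T),1/\cosh(\kappa\mathbb T)$ decay exponentially. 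The leading balance comes from the boundary layer near $t=\mathbb T$ where $\delta$ must rise to $x$: the dominant contribution of the kinetic plus potential energy is governed by $\kappa/2\approx \sqrt{\sigma^2A}/(2\sqrt\Lambda)$ multiplied by $\Lambda/(\sigma^2A)$, producing the target leading term $\frac{(x+\sigma\sqrt A\phi)^2}{4\sigma\sqrt A}-\frac{\sigma\sqrt A\phi^2}{2}$ after the $\phi$-dependent boundary contributions are accounted for.

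The main obstacle, and the heart of the estimate \eqref{5.1+}, is obtaining the error bound uniformly over $(\mathbb T,x,\phi)\in[\epsilon,T]\times K$ rather than merely pointwise. The explicit value $V_{\mathbb T}(\Lambda,x,\phi)$ will be a rational-hyperbolic expression in $\kappa\mathbb T$; I would expand each $\coth,\tanh,\operatorname{csch},\operatorname{sech}$ in $\kappa\mathbb T$ and collect terms by order in $\sqrt\Lambda$. The constant term must match the claimed limit exactly (this is where the interplay between the $x$-boundary condition and the $\phi\Lambda$ inhomogeneity must produce precisely the coefficients $\tfrac14$ and $-\tfrac12$), and all remaining terms must be shown to be $O(\Lambda)$ with a constant depending only on bounds for $|x|,|\phi|$ on $K$ and on $\epsilon$ (which keeps $\mathbb T$ bounded away from $0$ so that $\kappa\mathbb T\ge \sqrt{\sigma^2A}\,\epsilon/\sqrt\Lambda$ is large). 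The exponentially small corrections $e^{-\kappa\mathbb T}$ are trivially $O(\Lambda)$ uniformly on $[\epsilon,T]\times K$; the genuine work is verifying that the next algebraic correction after the constant is genuinely first order in $\Lambda$ and that its coefficient is controlled by a polynomial in $(x,\phi)$, which is bounded on the compact $K$. I would organize this as a short lemma on the asymptotics of the explicit value function, then read off both the uniqueness/measurability claim and the bound \eqref{5.1+} from it.
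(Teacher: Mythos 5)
Your overall route---freeze the nonlocal integral term, reduce the problem to a constant-coefficient linear second-order ODE solved by hyperbolic functions, and then carry out a uniform asymptotic expansion of the explicit value---is essentially the paper's own proof; the paper's only organizational difference is that it imposes $y=\int_0^{\mathbb T}\delta_t\,dt$ as a hard constraint (Euler--Lagrange with a Lagrange multiplier) and then minimizes an explicit scalar quadratic over $y$, whereas you close the system by self-consistency in $m$. However, two points in your plan are genuine problems. First, your opening assertion that the functional in \eqref{5.1-} is ``quadratic and strictly convex in $\delta$'' is unjustified and is not obvious: the term $-\frac{1}{2\Lambda\mathbb T}\bigl(\phi\Lambda-\int_0^{\mathbb T}\delta_t\,dt\bigr)^2$ enters with a \emph{minus} sign, so the functional is a difference of convex terms. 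Strict convexity on $C_{0,x}[0,\mathbb T]$ does hold, but only via the Cauchy--Schwarz inequality $\bigl(\int_0^{\mathbb T}\eta_t\,dt\bigr)^2\le\mathbb T\int_0^{\mathbb T}\eta^2_t\,dt$, which makes the zero-order part of the quadratic form nonnegative while the Dirichlet part is strictly positive on nonzero perturbations vanishing at both endpoints. Without this observation, neither existence/uniqueness of the minimizer nor your claim that the Euler--Lagrange stationarity condition characterizes it has any support; avoiding exactly this issue is why the paper splits the problem into a manifestly convex constrained inner problem followed by a one-dimensional quadratic in $y$ whose convexity is checked by hand.

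Second, your Euler--Lagrange equation is wrong in two ways that matter quantitatively. The variation of $\frac{1}{2\Lambda}\int\delta^2_t\,dt$ is $\frac{1}{\Lambda}\delta$ (you dropped the $\frac1\Lambda$), and the subtracted nonlocal term contributes with the sign opposite to the one you wrote; the correct equation is
\begin{equation*}
\ddot\delta_t=\rho\Bigl(\delta_t+\tfrac{1}{\mathbb T}\bigl(\phi\Lambda-m\bigr)\Bigr),
\qquad \rho=\frac{\sigma^2A}{\Lambda^2},
\end{equation*}
so the frequency is $\sqrt\rho=\sigma\sqrt A/\Lambda$, not your $\kappa=\sigma\sqrt A/\sqrt\Lambda$ (your identification $\kappa=\sqrt\rho$ is false). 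This is not cosmetic: the prefactor multiplying the boundary-layer energy is $\frac{\Lambda\sqrt\rho}{2\sigma^2A}=\frac{1}{2\sigma\sqrt A}$, which is of order one precisely because $\sqrt\rho\sim\Lambda^{-1}$; with your $\kappa$ it would be $\frac{\Lambda\kappa}{2\sigma^2A}\sim\sqrt\Lambda\to0$, so the limit would come out zero instead of $\frac{(x+\sigma\sqrt A\phi)^2}{4\sigma\sqrt A}-\frac{\sigma\sqrt A\phi^2}{2}$, and the $O(\Lambda)$ error bound in \eqref{5.1+} could not emerge. Likewise the sign flip on the forcing constant would corrupt the $x\phi$ cross term in the limit. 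Once both are corrected, your self-consistency equation in $m$ reproduces the paper's optimal $y$, and the remaining uniform expansion over $[\epsilon,T]\times(0,1)\times K$ (exponentially small $\mathrm{sech}$, $\mathrm{csch}$ corrections plus a genuinely first-order algebraic term with coefficients polynomial in $(x,\phi)$) goes through exactly as in the paper.
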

\begin{proof}
Fix $(\mathbb T,\Lambda,x,\phi)\in [\epsilon,T]\times (0,1)\times \mathbb R^2$.
 First we solve the optimization problem (\ref{5.1-}) under the additional constraint that
 $\int_{0}^\mathbb T\delta_t dt$ is given. Then, we will find the optimal
$\int_{0}^\mathbb T \delta_t dt$.

For any $y\in\mathbb R$ let $C^y_{0,x}[0,\mathbb T]\subset C_{0,x}[0,\mathbb T]$ be the subset of all functions
$\delta\in C_{0,x}[0,\mathbb T]$ which satisfy $\int_{0}^{\mathbb T}\delta_t dt=y$.
Consider the minimization problem
$$\min_{\delta\in C^y_{0,x}[0,\mathbb T]} \int_{0}^{\mathbb T} H(\dot \delta_t,\delta_t)dt$$
where $H(v_1,v_2):=\frac{\Lambda}{2\sigma^2 A}v^2_1+\frac{1}{2\Lambda} v^2_2$ for $v_1,v_2\in\mathbb R$.
This optimization problem is convex and so it has a unique solution which has
to satisfy the Euler–Lagrange equation (for details see Gelfand \& Fomin (1963))
$\frac{d}{dt}\frac{\partial H}{\partial \dot \delta_t}=\lambda+\frac{d}{dt}\frac{\partial H}{\partial \delta_t}$
for some constant $\lambda>0$ (lagrange multiplier due to the constraint $\int_{0}^{\mathbb T} \delta_t dt=y$).
Thus, the optimizer which we denote by $\hat\delta$ solves the ODE
$\ddot{\hat\delta}_t-\rho\hat\delta\equiv const$ (recall the risk-liquidity ratio 
$\rho=\rho(\Lambda):=\frac{\sigma^2 A}{\Lambda^2}$).
From the standard theory it follows that
\begin{equation}\label{sol}
\hat\delta_t=c_1\sinh(\sqrt\rho t)+c_2\sinh(\sqrt\rho(T-t))+c_3, \ \ t\in [0,\mathbb T]
\end{equation}
for some constants $c_1,c_2,c_3$.
From the three constraints
$\hat\delta_0=0$, $\hat\delta_{\mathbb T}=x$ and $\int_{0}^{\mathbb T} \hat\delta_t dt=y$ we obtain
\begin{equation}\label{sol1}
c_1=\frac{x-c_3}{\sinh(\sqrt\rho \mathbb T)}, \ \ c_2=-\frac{c_3}{\sinh(\sqrt\rho \mathbb T)} \ \
\mbox{and}  \ \ c_3=\frac{\sqrt\rho y-x\tanh(\sqrt\rho \mathbb T/2)}{\sqrt\rho \mathbb T-2\tanh(\sqrt\rho \mathbb T/2)}.
\end{equation}

We argue that
\begin{eqnarray}\label{5.1}
&\rho\int_{0}^\mathbb T \hat\delta^2_t dt+\int_{0}^{\mathbb T} \dot{\hat\delta}^2_tdt=\rho\int_{0}^{\mathbb T}
\left((\hat\delta_t-c_3)+c_3\right)^2 dt+\int_{0}^{\mathbb T} \dot{\hat\delta}^2_tdt\nonumber\\
&=\frac{\sqrt\rho}{2}\left(c^2_1+c^2_2\right)\sinh\left(2\sqrt\rho \mathbb T\right)-2c_1c_2\sqrt\rho\sinh(\sqrt\rho \mathbb T)-\rho c^2_3 \mathbb T+2\rho c_3 y\nonumber\\
&=\sqrt{\rho}x^2\coth(\sqrt\rho \mathbb T)+2\sqrt\rho c_1 c_2\sinh(\sqrt\rho \mathbb T)\left(\cosh(\sqrt\rho \mathbb T)-1\right)-\rho c^2_3 \mathbb T+2\rho c_3 y\nonumber\\
&=\sqrt{\rho}x^2\coth(\sqrt\rho \mathbb T)+\left(2\sqrt\rho \tanh(\sqrt \rho \mathbb T/2)-\rho \mathbb T\right)c^2_3\nonumber\\
&+2\left(\rho y-\sqrt\rho \tanh(\sqrt \rho \mathbb T/2)x\right)c_3\nonumber\\
&=\sqrt{\rho}\left(x^2\coth(\sqrt\rho \mathbb T)+\frac{\left(x\tanh(\sqrt\rho \mathbb T/2)-\sqrt\rho y\right)^2}{\sqrt\rho \mathbb T-2\tanh(\sqrt\rho \mathbb T/2)}\right).
\end{eqnarray}
Indeed, the first equality is obvious. The second equality follows from (\ref{sol}) and simple computations. The third equality is due
to $c_1-c_2=\frac{x}{\sinh(\sqrt\rho \mathbb T)}$. The fourth equality is due to
$c_1c_2=\frac{c^2_3-xc_3}{\sinh^2(\sqrt\rho \mathbb T)}$. The last equality follows from substituting $c_3$.

From (\ref{5.1}) we conclude that in order to minimize (\ref{5.1-}) we need to find $y$ which minimizes
the quadratic form
$$\frac{1}{2\sqrt\rho\Lambda}\
\frac{\left(x\tanh(\sqrt\rho \mathbb T/2)-\sqrt\rho y\right)^2}{\sqrt\rho \mathbb T-2\tanh(\sqrt\rho \mathbb T/2)}-\frac{1}{2\Lambda \mathbb T}\left(\phi\Lambda-y\right)^2.$$
Observe that this quadratic form is convex in $y$ and so has a unique minimum
\begin{equation}\label{sol2}
y=\frac{x\mathbb T }{2}-\frac{\phi\Lambda \left(\sqrt\rho \mathbb T-2\tanh(\sqrt\rho \mathbb T/2)\right)}{2\tanh(\sqrt\rho \mathbb T/2)}.
\end{equation}
Thus, define
$\Xi_{\mathbb T}(\Lambda,x,\phi):=\hat\delta$ where $\hat\delta$ is given by
(\ref{sol})--(\ref{sol1}) and (\ref{sol2}).
Clearly, $\Xi_{\mathbb T}(\Lambda,x,\phi)$ is the unique minimizer for (\ref{5.1-}).

Let
$$V_{\mathbb T}(\Lambda,x,\phi):=
\frac{\Lambda}{2 \sigma^2 A}\int_{0}^{\mathbb T}\dot{\hat\delta}_t^2 dt+
\frac{1}{2\Lambda}\left(\int_{0}^{\mathbb T}\hat\delta^2_t dt-\frac{1}{\mathbb T}\left(\phi\Lambda-\int_{0}^{\mathbb T}\hat\delta_t dt\right)^2\right).
$$
Finally, we prove (\ref{5.1+}). Choose $\epsilon>0$ and a compact set $K\subset\mathbb R^2$. Assume that
$(\mathbb T,x,\phi)\in [\epsilon,T]\times K$.
From (\ref{5.1}) and the equality $\rho=\frac{\sigma^2 A}{\Lambda^2}$
we get that there exists a constant $C_1$ (may depend on $\epsilon$ and $K$)
such that
\begin{equation}\label{5.100}
\left|V_{\mathbb T}(\Lambda,x,\phi)-\left(\frac{x^2}{2\sigma \sqrt A}+\frac{y^2}{\sigma\sqrt A\mathbb T^2}+\frac{\phi y}{\mathbb T}-\frac{xy}{\sigma\sqrt A\mathbb T}\right)\right|\leq C_1\Lambda
\end{equation}
where $y$ given by (\ref{sol2}).
From (\ref{sol2}) we have
$\left|y-\frac{\mathbb T}{2}\left(x-\sigma\sqrt A\phi\right)\right|\leq C_2 \Lambda$
for some constant $C_2$ (may depend on $\epsilon$ and $K$).
This together with (\ref{5.100}) gives (\ref{5.1+}) and completes the proof.
\end{proof}

\end{document}